\newtheorem{theorem}{Theorem}[section]
\newtheorem{cor}[theorem]{Corollary}
\newtheorem{proposition}[theorem]{Proposition}
\newtheorem{lemma}[theorem]{Lemma}
\newtheorem{observation}[theorem]{Observation}
\newcommand{\R}{\mathbb{R}}
\newcommand{\complexityclass}[1]{\textbf{#1}}
\newcommand{\computproblem}[1]{\textsc{#1}}
\newcommand{\NP}{\complexityclass{NP}\xspace}
\newcommand{\SET}{\computproblem{Set Cover}\xspace}
\def\moverlay{\mathpalette\mov@rlay}
\def\mov@rlay#1#2{\leavevmode\vtop{
   \baselineskip\z@skip \lineskiplimit-\maxdimen
   \ialign{\hfil$\m@th#1##$\hfil\cr#2\crcr}}}
\newcommand{\charfusion}[3][\mathord]{
    #1{\ifx#1\mathop\vphantom{#2}\fi
        \mathpalette\mov@rlay{#2\cr#3}
      }
    \ifx#1\mathop\expandafter\displaylimits\fi}
\newcommand{\ls}[1]{\charfusion[\mathop]{\mathlarger\bigsqcup}{#1}}
\newcommand{\rs}[1]{\charfusion[\mathop]{\mathlarger\bigsqcap}{#1}}
\newcommand{\lrs}[1]{\charfusion[\mathop]{\mathlarger\bigbox}{#1}}
\title{Face-Guarding Polyhedra}
\author{Giovanni Viglietta\thanks{School of Electrical Engineering and Computer Science, University of Ottawa, Ottawa ON, Canada, {\tt viglietta@gmail.com}.}}
\begin{document}
\thispagestyle{empty}
\maketitle

\begin{abstract}
We study the Art Gallery Problem for face guards in polyhedral environments. The problem can be informally stated as: \emph{how many (not necessarily convex) windows should we place on the external walls of a dark building, in order to completely illuminate its interior?}

We consider both \emph{closed} and \emph{open} face guards (i.e., faces with or without their boundary), and we study several classes of polyhedra, including \emph{orthogonal} polyhedra, \emph{4-oriented} polyhedra, and \emph{2-reflex orthostacks}.

We give upper and lower bounds on the minimum number of faces required to guard the interior of a given polyhedron in each of these classes, in terms of the total number of its faces, $f$. In several cases our bounds are tight: $\lfloor f/6\rfloor$ \emph{open} face guards for orthogonal polyhedra and 2-reflex orthostacks, and $\lfloor f/4\rfloor$ \emph{open} face guards for $4$-oriented polyhedra. Additionally, for \emph{closed} face guards in 2-reflex orthostacks, we give a lower bound of $\lfloor (f+3)/9\rfloor$ and an upper bound of $\lfloor (f+1)/7\rfloor$.

Then we show that it is \NP-hard to approximate the minimum number of (closed or open) face guards within a factor of $\Omega(\log f)$, even for polyhedra that are orthogonal and simply connected. We also obtain the same hardness results for \emph{polyhedral terrains}.

Along the way we discuss some applications, arguing that face guards are \emph{not} a reasonable model for guards \emph{patrolling} on the surface of a polyhedron.
\end{abstract}

\section{Introduction}\label{s1}

\paragraph{Previous work.}

Art Gallery Problems have been studied in computational geometry for decades: given an \emph{enclosure}, place a (preferably small) set of \emph{guards} such that every location in the enclosure is seen by some guard. Most of the early research on the Art Gallery Problem focused on guarding 2-dimensional polygons with either point guards or segment guards~\cite{art,shermer,urrutia2000}.

Gradually, some of the attention started shifting to 3-dimensional settings, as well. Several authors have considered edge guards in 3-dimensional polyhedra, either in relation to the classical Art Gallery Problem or to its variations~\cite{viglietta4,edgenew,wireless,viglietta2,thesis}.

Recently, Souvaine et al.~\cite{faceguards} introduced the model with \emph{face guards} in 3-dimensional polyhedra. Ideally, each guard is free to roam over an entire face of a polyhedron, including the face's boundary. Let $g(\mathcal P)$ be the minimum number of face guards needed for a polyhedron $\mathcal P$, and let $g(f)$ be the maximum of $g(\mathcal P)$ over all polyhedra $\mathcal P$ with exactly $f$ faces. For general polyhedra, Souvaine et al.\ showed that $\lfloor f/5\rfloor \leqslant g(f) \leqslant \lfloor f/2\rfloor$ and, for the special case of orthogonal polyhedra (i.e., polyhedra whose faces are orthogonal to the coordinate axes), they showed that $\lfloor f/7\rfloor \leqslant g(f) \leqslant \lfloor f/6\rfloor$. They also suggested several open problems, such as studying \emph{open} face guards (i.e., face guards whose boundary is omitted), and the computational complexity of minimizing the number of face guards.

Subsequently, face guards have been studied to some extent also in the case of polyhedral terrains. In~\cite{terrain2,terrain3} a tight bound is obtained, and in~\cite{terrain1} it is proven that minimizing face guards in triangulated terrains is \NP-hard. However, since these results apply to terrains, they have no direct implications on the problem of face-guarding polyhedral enclosures.

\paragraph{Our contribution.}

In this paper we solve some of the problems left open in~\cite{faceguards}, and we also expand our research in some new directions. A preliminary version of this paper has appeared at CCCG 2013~\cite{mycccg}.

In Section~\ref{s2} we discuss the face guard model, arguing that a face guard fails to meaningfully represent a guard ``patrolling'' on a face of a polyhedron. Essentially, there are cases in which the path that such a patrolling guard ought to follow is so complex (in terms of the number of turns, if it is a polygonal chain) that a much simpler path, striving from the face, would guard not only the region visible from that face, but the entire polyhedron.
However, face guards are still a good model for illumination-related problems, such as placing (possibly non-convex) windows in a dark building.

In Section~\ref{s3} we obtain some new bounds on $g(f)$, for both closed and open face guards. First we generalize the upper bounds given in~\cite{faceguards} by showing that, for $c$-oriented polyhedra (i.e., whose faces have $c$ distinct orientations), $g(f)\leqslant \lfloor f/2 - f/c\rfloor$. We also provide some new lower bound constructions, which meet our upper bounds in two notable cases: orthogonal polyhedra with open face guards ($g(f)=\lfloor f/6\rfloor$), and $4$-oriented polyhedra with open face guards ($g(f)=\lfloor f/4\rfloor$). Then we go on to study a special class of orthogonal polyhedra, namely \emph{2-reflex orthostacks}.

The following table summarizes our new results, as well as those that were already known. Each entry contains a lower and an upper bound on $g(f)$, or a single tight bound. When applicable, a reference is given to the paper in which each result was first obtained. Observe that, for open face guards in triangulated terrains, $f$ guards are easily seen to be necessary in the worst case. Indeed, if the terrain is a convex ``dome'' (i.e., if no edges are reflex), then every face requires an open face guard. In the case of closed face guards in triangulated terrains, we remark that the bound given in~\cite{terrain3} is expressed in terms of the number of vertices. Therefore we rewrote it in terms of $f$, using Euler's formula.

\begin{center}
\begin{tabular}{c|c|c}
 & \textbf{Open face guards} & \textbf{Closed face guards}\\
\hline
\textbf{2-reflex orthostacks} & $g(f)=\lfloor f/6\rfloor$ & $\lfloor (f+3)/9\rfloor\leqslant g(f)\leqslant\lfloor (f+1)/7\rfloor$ \\
\textbf{Orthogonal polyhedra} & $g(f)=\lfloor f/6\rfloor$ & $\lfloor f/7\rfloor\leqslant_{\cite{faceguards}} g(f)\leqslant_{\cite{faceguards}}\lfloor f/6\rfloor$\\
\textbf{4-oriented polyhedra} & $g(f)=\lfloor f/4\rfloor$ & $\lfloor f/5\rfloor\leqslant g(f)\leqslant\lfloor f/4\rfloor$ \\
\textbf{General polyhedra} & $\lfloor f/4\rfloor\leqslant g(f)\leqslant\lfloor f/2\rfloor-1$ & $\lfloor f/5\rfloor\leqslant_{\cite{faceguards}} g(f)\leqslant\lfloor f/2\rfloor-1$ \\
\textbf{Triangulated terrains} & $g(f)=f$ & $g(f)=_{\cite{terrain3}}\lfloor (f+3)/6\rfloor$ \\
\end{tabular}
\end{center}

In Section~\ref{s4} we provide an approximation-preserving reduction from \SET to the problem of minimizing the number of (closed or open) face guards in simply connected orthogonal polyhedra. It follows that the minimum number of face guards is \NP-hard to approximate within a factor of $\Omega(\log f)$. We also obtain the same result for (non-triangulated) terrains. This adds to the result of~\cite{terrain1}, which states that minimizing closed face guards is \NP-hard in triangulated terrains. We also briefly discuss the membership in \NP of the minimization problem, pointing out some difficulties in applying previously known techniques.

We leave as an open problem the task to tighten all the bounds in the table above, as well as to prove or disprove that minimizing face guards is in \NP. We conjecture that all the lower bounds are tight, and that the minimization problem does belong to \NP.

\section{Model and motivations}\label{s2}

\paragraph{Definitions.}

A \emph{polyhedron} is a connected subset of $\R^3$, union of finitely many closed tetrahedra embedded in $\R^3$, whose boundary is a (possibly non-connected) orientable 2-manifold. Since a polyhedron's boundary is piecewise linear, the notion of \emph{face} of a polyhedron is well defined as a maximal planar subset of its boundary with connected and non-empty relative interior. Thus a face is a plane polygon, possibly with holes, and possibly with some degeneracies, such as hole boundaries touching each other at a single vertex. Any vertex of a face is also considered a \emph{vertex} of the polyhedron. \emph{Edges} are defined as minimal non-degenerate straight line segments shared by two distinct faces and connecting two vertices of the polyhedron. Since a polyhedron's boundary is an orientable 2-manifold, the relative interior of an edge lies on the boundary of exactly two faces, thus determining an internal dihedral angle (with respect to the polyhedron). An edge is \emph{reflex} if its internal dihedral angle is reflex, i.e., strictly greater than $180^\circ$.

Given a polyhedron, we say that a point $x$ is \emph{visible} to a point $y$ if no point in the straight line segment $xy$ lies in the exterior of the polyhedron. For any point $x$, we denote by $\mathcal V(x)$ the \emph{visible region} of $x$, i.e., the set of points that are visible to $x$. In general, for any set $S\subset \mathbb R^3$, we let $\mathcal V(S)=\bigcup_{x\in S}\mathcal V(x)$. A set is said to \emph{guard} a polyhedron if its visible region coincides with the entire polyhedron (including its boundary). The Art Gallery Problem for face guards in polyhedra consists in finding a (preferably small) set of faces whose union guards a given polyhedron. If such faces include their relative boundary, they are called \emph{closed} face guards; if their boundary is omitted, they are called \emph{open} face guards.

A polyhedron is \emph{$c$-oriented} if there exist $c$ unit vectors such that each face is orthogonal to one of the vectors. If these unit vectors form an orthonormal basis of $\mathbb R^3$, the polyhedron is said to be \emph{orthogonal}. Hence, a cube is orthogonal, a tetrahedron and a regular octahedron are both 4-oriented, etc.

We will refer informally to the $z$ axis as the \emph{vertical} axis. Specifically, the positive $z$ direction will be \emph{up}, and the opposite direction will be \emph{down}. Hence, a direction parallel to the $xy$ plane will be said to be \emph{horizontal}. The positive $x$ direction will be \emph{right}, the negative $x$ direction will be \emph{left}, and so on.

\paragraph{Motivations.}

There is a straightforward analogy between \emph{guarding} problems and \emph{illumination} problems: placing guards in a polyhedron corresponds to placing light sources in a dark building, in order to illuminate it completely. For instance, a point guard would model a \emph{light bulb} and a segment guard could be a \emph{fluorescent tube}. Because face guards are 2-dimensional and lie on the boundary of the polyhedron, we may think  of them as \emph{windows}. A window may have any shape, but should be flat, and hence it should lie on a single face. It follows that, if our purpose is to illuminate as big a region as possible, we may assume without loss of generality that a window always coincides with some face.

Face guards were introduced in~\cite{faceguards} to represent guards \emph{roaming} over a face. This is in accordance with the traditional usage of segment guards as a model for guards that \emph{patrol} on a line~\cite{art}. While this is perfectly sound in the case of segment guards, face guards pose additional problems, as explained next.

\begin{figure}[h]
\centering
\subfigure[]{\label{fig1:a}\includegraphics[height=.35\linewidth]{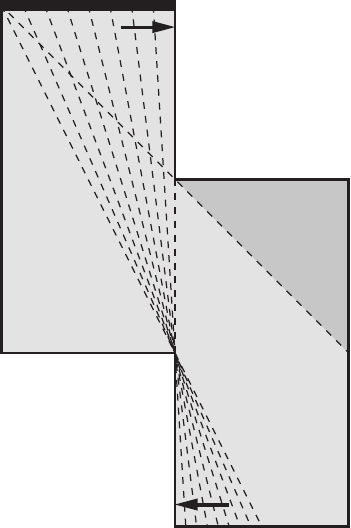}}\qquad\qquad
\subfigure[]{\label{fig1:b}\includegraphics[height=.35\linewidth]{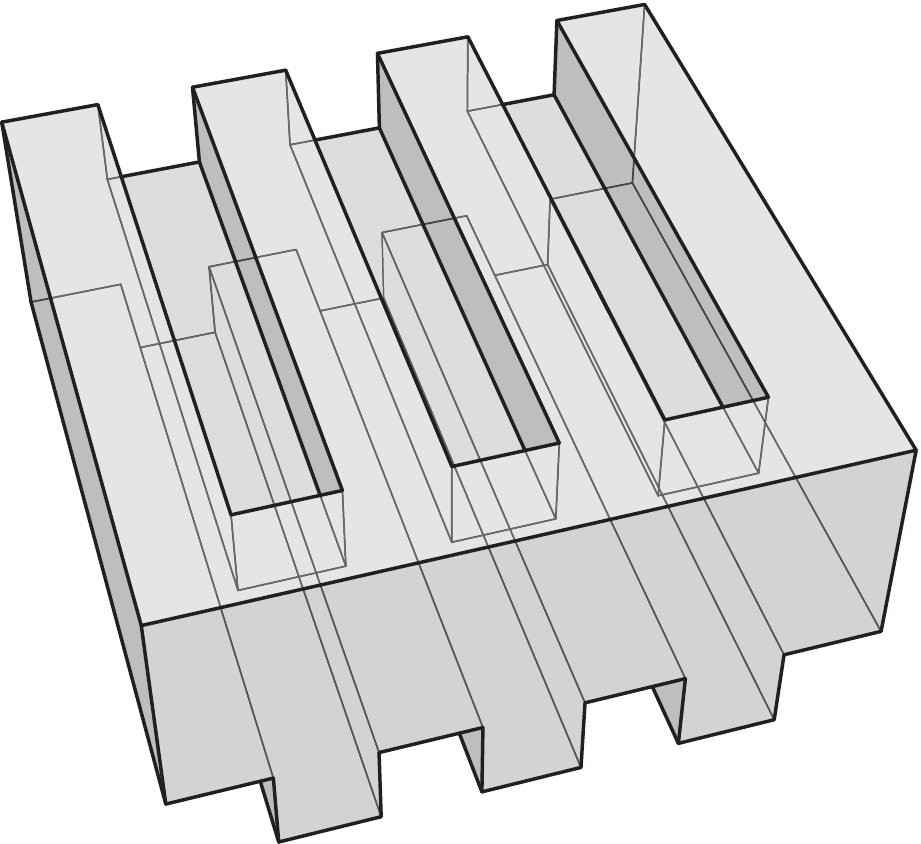}}
\caption{Constructing the polyhedron in Figure~\ref{fig2}}
\label{fig1}
\end{figure}

We begin by observing that, even in 2-dimensional polygons, there may be edge guards that cannot be locally ``replaced'' by finitely many point guards. Figure~\ref{fig1:a} shows an example: if a subset $G$ of the top edge $\ell$ is such that $\mathcal V(G)=\mathcal V(\ell)$, then the right endpoint of $\ell$ must be a limit point of $G$.

We can exploit this fact to construct the class of polyhedra sketched in Figure~\ref{fig2}. First we cut long parallel \emph{dents} on opposite faces of a cuboid, as in Figure~\ref{fig1:b}, in such a way that the resulting polyhedron looks like an extruded ``iteration'' of the polygon in Figure~\ref{fig1:a}. Then we stab this construction with a row of \emph{girders} running orthogonally with respect to the dents, as Figure~\ref{fig2:a} illustrates.

\begin{figure}[h]
\centering
\subfigure[]{\label{fig2:a}\includegraphics[height=.35\linewidth]{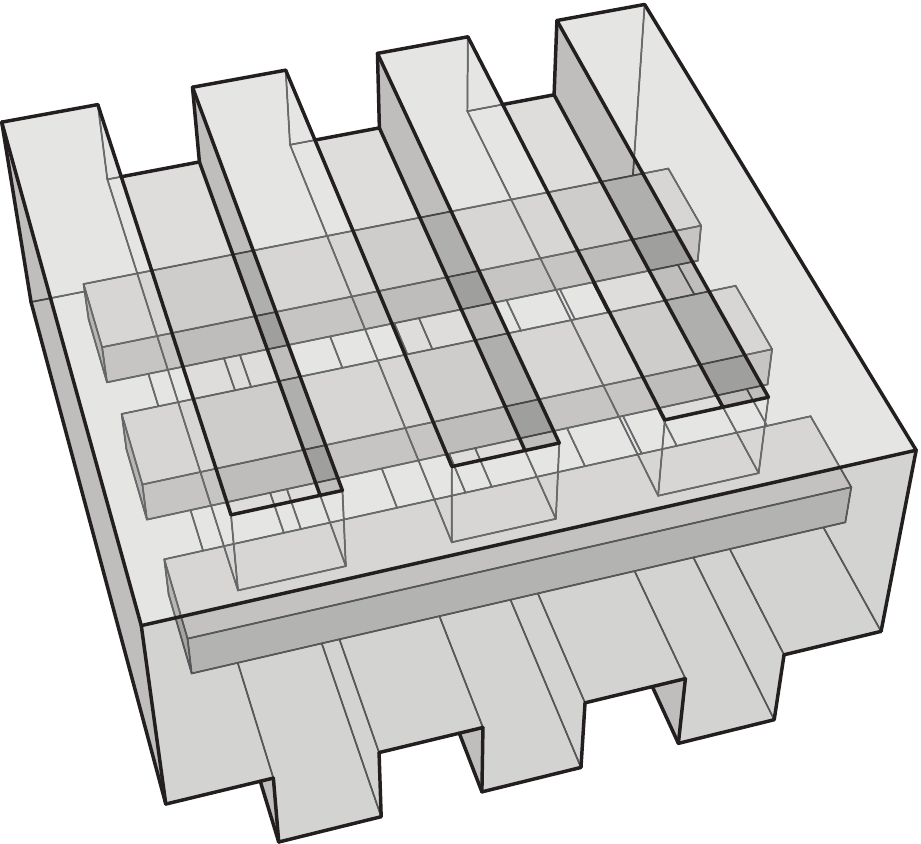}}\qquad\qquad
\subfigure[]{\label{fig2:b}\includegraphics[height=.35\linewidth]{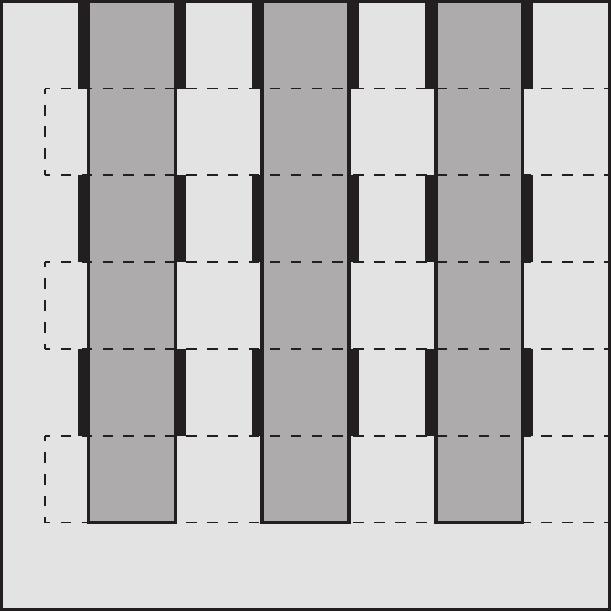}}
\caption{A guard patrolling on the top face must follow a path of quadratic complexity}
\label{fig2}
\end{figure}

Suppose that a guard has to patrol the top face of this construction, eventually seeing every point that is visible from that face. The situation is represented in Figure~\ref{fig2:b}, where the light-shaded region is the top face, and the dashed lines mark the underlying girders. By the above observation and by the presence of the girders, each thick vertical segment must be approached by the patrolling guard from the interior of the face.

Suppose that the polyhedron has $n$ dents and $n$ girders. Then, the number of its vertices, edges, or faces is~$\Theta(n)$. Now, if the guard moves along a polygonal chain lying on the top face, such a chain must have at least a vertex on each thick segment, which amounts to~$\Omega(n^2)$ vertices. Similarly, if the face guard has to be substituted with segment guards lying on it, quadratically many guards are needed.

On the other hand, it is easy to show that a path of linear complexity is sufficient to guard any polyhedron, provided that its boundary is connected: triangulate every face (thus adding linearly many new ``edges'') and traverse the resulting 1-skeleton in depth-first order starting from any vertex, thus covering all edges. Because the set of edges is a guarding set for any polyhedron~\cite[Observation~3.10]{thesis}, the claim follows. In general, if the boundary is not connected (i.e., the polyhedron has some ``cavities''), then we may repeat the same construction for every connected component.

This defeats the purpose of having faces model guards patrolling on segments, as it makes little sense for a face of ``unit weight'' to represent quadratically many guards. Analogously, a roaming guard represented by a face may have to follow a path that is overly complex compared to the guarding problem's optimal solution.

Even if we are allowed to replace a face guard with guards patrolling any segment in the polyhedron (i.e, not necessarily constrained to live on that face), a linear number of them may be required. Indeed, consider a cuboid with very small height, and arrange $n$ thin and long \emph{chimneys} on its top, in such a way that no straight line intersects more than two chimneys. The complexity of the polyhedron is $\Theta(n)$, and a face guard lying on the bottom face must be replaced by $\Omega(n)$ segment guards. On the other hand, we know that a linear number of segment guards is enough not only to ``dominate'' a single face, but to entirely guard any polyhedron.

Summarizing, a face guard appropriately models an entity that is naturally constrained to live on a single face, like a flat window, and unlike a team of patrolling guards. In the case of a single roaming guard, the model is insensitive to the complexity of the guard's path.

\section{Bounds on face guard numbers}\label{s3}

\subsection{$c$-oriented polyhedra}

Here we parameterize polyhedra according to the orientations of their faces, and we give upper and lower bounds on the number of face guards required to guard them.

\paragraph{Upper bounds.}

By generalizing the approach used in~\cite[Lemmas~2.1,~3.1]{faceguards}, we provide an upper bound on face guard numbers, which becomes tight for open face guards in orthogonal polyhedra and open face guards in 4-oriented polyhedra. We emphasize that our upper bound holds for both closed and open face guards, and for polyhedra of any genus and number of cavities.

\begin{theorem}\label{t3:face}
Any $c$-oriented polyhedron with $f$ faces is guardable by
$$\left\lfloor\frac f 2 - \frac f c\right\rfloor$$
closed or open face guards.
\end{theorem}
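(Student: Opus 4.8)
The plan is to group the faces by orientation and then, within each orientation, to discard the faces that need not be guarded: namely, one face from each ``antipodal pair'' of parallel faces whose outward normals point in opposite directions. Since the polyhedron is $c$-oriented, the $f$ faces partition into $c$ classes according to which of the $c$ axis vectors they are orthogonal to; within a fixed class, each face has its outward normal pointing in one of the two directions $\pm v$. The key claim will be that if we take \emph{all} the faces whose outward normal is $+v$ and \emph{all} of them whose outward normal is $-v$, we may safely drop the smaller of these two subcollections, because every point of the polyhedron visible from a dropped face is also visible from some retained face on the ``opposite side'' (this is the argument of~\cite[Lemmas~2.1,~3.1]{faceguards}: shoot a ray in direction $\pm v$ from any interior point; it exits through a face whose normal is $+v$ or $-v$, and in fact one can show it is seen by a face on whichever side we chose to keep). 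Doing this for each of the $c$ orientations, we retain, per orientation, at most half of the faces of that orientation; summing over orientations we retain at most $f/2$ faces. This already gives $\lfloor f/2\rfloor$, so a sharper accounting is needed to reach $\lfloor f/2 - f/c\rfloor$.

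To get the extra saving of $f/c$, I would refine the bookkeeping: for each orientation $i$, let the orientation have $f_i$ faces split as $a_i + b_i$ between the two normal directions, with $\sum_i f_i = f$. After discarding the smaller side we keep $\min(a_i,b_i) + \max(a_i,b_i) - $ wait, we keep only $\max(a_i,b_i)$ — no: we keep the faces on \emph{one} chosen side, so we keep $\max(a_i,b_i)$, which is at most $f_i - \lceil f_i/2\rceil$ only when the split is balanced. The honest bound is $\max(a_i,b_i) \le f_i - 1$ once $f_i \ge 2$, but that is too weak; instead one observes $\max(a_i,b_i) \le f_i - \lceil f_i / 2 \rceil$ fails in general. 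The correct route is: a $c$-oriented polyhedron must actually use \emph{both} directions of each of its $c$ orientations (otherwise the boundary could not close up), so $a_i,b_i \ge 1$ and hence $\min(a_i,b_i)\ge 1$ for each $i$; therefore the number of discarded faces is $\sum_i \min(a_i,b_i)\ge c$ trivially, which is again too weak. The real leverage, following the structure of the cited lemmas, is a \emph{global} pairing rather than a per-orientation one: one shows that the faces one keeps can be taken to be at most $\lfloor f/2 - f/c\rfloor$ by an averaging/pigeonhole argument over the $c$ orientations — pick the orientation $i^\*$ maximizing $f_{i^\*}$ (so $f_{i^\*}\ge f/c$), keep for that orientation the \emph{smaller} side (size $\le \lfloor f_{i^\*}/2\rfloor$) and for every other orientation the smaller side as well, and verify that dropping the complementary faces is still safe because the retained faces of orientation $i^\*$ alone, being a full ``one-sided'' slab family, dominate all points via vertical-type ray shooting in direction $\pm v_{i^\*}$.

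The main obstacle — and the step I would spend the most care on — is the visibility lemma that justifies discarding an entire one-sided family of parallel faces: I must argue that for every point $x$ in the polyhedron, the ray from $x$ in the chosen direction $+v$ (say) hits the boundary at a point lying on a face with outward normal $+v$, and that the segment from $x$ to that hit point stays inside the polyhedron. The first part is immediate from the polyhedron being a bounded $c$-oriented solid (the ray must eventually exit, and it can only exit transversally through a face orthogonal to $v$, whose outward normal is then forced to be $+v$ on an exit). The delicate part is that the exit \emph{segment} is interior: this can fail if the ray grazes along a face or re-enters, so one has to perturb $x$ or argue with the first exit point and use that the polyhedron is a closed solid so the open segment up to the \emph{first} boundary crossing is interior — standard but needs to be stated cleanly, exactly as in~\cite{faceguards}. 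Once that lemma is in hand, the counting is the routine pigeonhole indicated above, and the floor function is handled by noting $\sum_i \lfloor f_i/2\rfloor \le \lfloor (\sum_i f_i)/2 \rfloor$ together with the single improved term of size $f/c$ from the largest orientation.
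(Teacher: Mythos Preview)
Your proposal has two genuine gaps, one in the visibility lemma and one in the counting, and together they mean the argument does not go through.

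\medskip
\noindent\textbf{The ray-shooting claim is false.} You assert that a ray from $x$ in direction $+v$ ``can only exit transversally through a face orthogonal to $v$, whose outward normal is then forced to be $+v$''. This holds in an orthogonal polyhedron when $v$ is a coordinate axis, because every other face is parallel to $v$; but in a general $c$-oriented polyhedron the ray may exit through a face of \emph{any} orientation whose outward normal has positive $v$-component. Consequently, the faces with outward normal exactly $+v_{i^\star}$ do \emph{not} by themselves guard the polyhedron, and the sentence ``the retained faces of orientation $i^\star$ alone \ldots\ dominate all points'' is unjustified.

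\medskip
\noindent\textbf{The counting never reaches $f/2 - f/c$.} Keeping $\min(a_i,b_i)$ per orientation gives $\sum_i \min(a_i,b_i)\le \lfloor f/2\rfloor$, and there is no ``single improved term of size $f/c$'': for the \emph{largest} orientation you only know $\min(a_{i^\star},b_{i^\star})\le f_{i^\star}/2$, which is an \emph{upper} bound that grows with $f_{i^\star}$, not a saving. The paper's counting is structurally different. It takes the \emph{two} largest orientation classes, with $f_1+f_2\ge 2f/c$, and chooses the ``vertical'' direction to be $v_1\times v_2$. Every face of orientation $1$ or $2$ is then vertical, hence neither up- nor down-facing, and is discarded entirely. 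Among the remaining $\le f-2f/c$ non-vertical faces, at most half face downward; those are the guards, giving $\lfloor f/2 - f/c\rfloor$.

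\medskip
\noindent\textbf{What actually proves visibility.} The guard set is \emph{all} downward-facing faces (across orientations $3,\ldots,c$), not one side of one orientation. The paper does not shoot a single ray; it takes a thin upward cone from $x$ and argues that its intersection with the boundary, minus at most two vertical faces through $x$ and finitely many edges, has nonempty relative interior lying in downward-facing faces. The cone is what makes the argument work for \emph{open} face guards and for points $x$ on the boundary---issues your first-exit ray argument would also have to confront.
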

\begin{proof}
Let $\mathcal P$ be a polyhedron whose faces are orthogonal to $c\geqslant 3$ distinct vectors. Let $f_i$ be the number of faces orthogonal to the $i$-th vector $v_i$. We may assume that $i<j$ implies $f_i \geqslant f_j$. Then,
$$f_1+f_2 \geqslant \frac {2f} c.$$
Let us stipulate that the direction of the cross product $v_1 \times v_2$ is \emph{vertical}. Thus, there are at most
$$f-\frac {2f} c$$
non-vertical faces. Some of these are facing up, the others are facing down. Without loss of generality, at most half of them are facing down, and we assign a face guard to each of them. Therefore, at most
$$\left\lfloor\frac f 2 - \frac f c\right\rfloor$$
face guards have been assigned.

Let $x$ be any point in $\mathcal P$. If $x$ belongs to a face with a guard, $x$ is guarded. Otherwise, consider an infinite circular cone $\mathcal C(x)$ with apex $x$ and axis directed upward. Let $\mathcal G$ be  the intersection of $\mathcal V(x)$, $\mathcal C(x)$, and the boundary of $\mathcal P$. Intuitively, $\mathcal G$ is the part of the boundary of the polyhedron that would be illuminated by a spotlight placed at $x$ and pointed upward. We will show that this area contains points belonging to some guard, which make $x$ guarded.

If the aperture of $\mathcal C(x)$ is small enough, the relative interior of $\mathcal G$ belongs entirely to faces containing guards and to at most two vertical faces containing $x$. Because these vertical faces obstruct at most one dihedral angle from $x$'s view, the portion of $\mathcal G$ not belonging to them has non-empty interior. If we remove from this portion the (finitely many) edges of $\mathcal P$, we still have a non-empty region. By construction, this region belongs to the interiors of faces containing a guard; hence $x$ is guarded.
\end{proof}

Our guarding strategy becomes less efficient as $c$ grows. In general, if no two faces are parallel (i.e., $c=f$), we get an upper bound of $\left \lfloor f/2\right\rfloor - 1$ face guards, which improves on the one in~\cite{faceguards} by just one unit.

\paragraph{Lower bounds.}

In~\cite{faceguards}, Souvaine et al.\ construct a class of orthogonal polyhedra with $f$ faces that need $\lfloor f/7\rfloor$ \emph{closed} face guards. In Figure~\ref{fig3} we give an alternative construction, with the additional property of having a 3-regular 1-skeleton and having no vertical reflex edges. Indeed, each small L-shaped polyhedron that is attached to the big cuboid adds seven faces to the construction, of which at least one must be selected.

\begin{figure}[h]
\centering
\includegraphics[width=.5\linewidth]{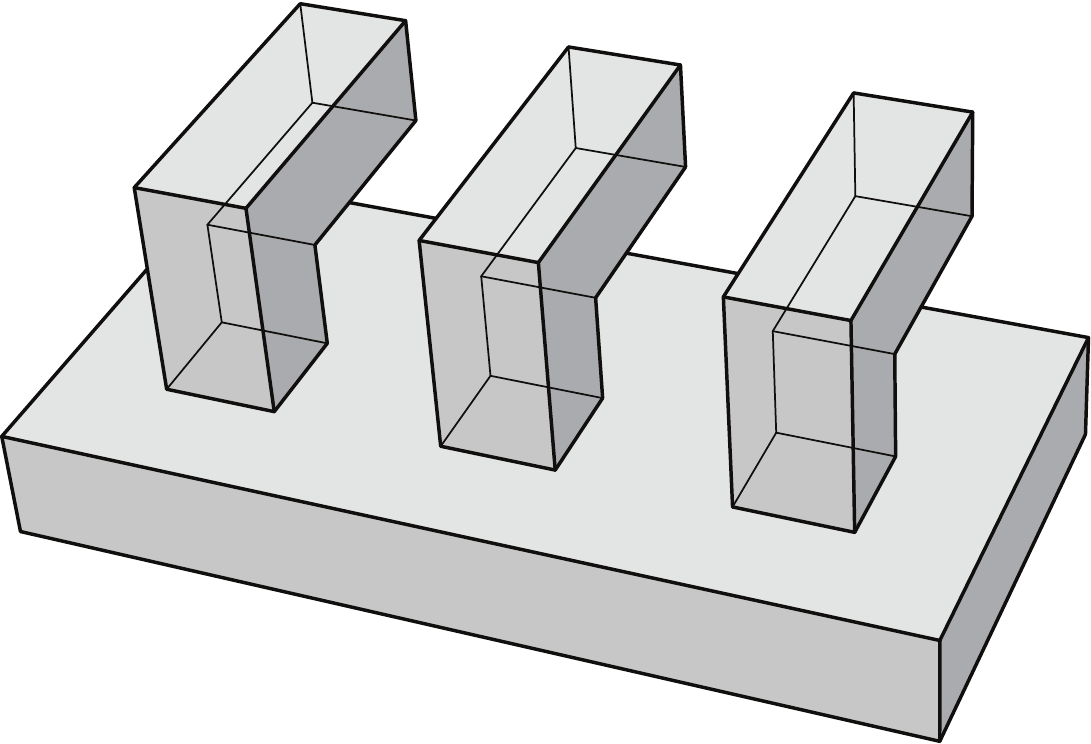}
\caption{Orthogonal polyhedron that needs $\lfloor f/7\rfloor$ closed face guards}
\label{fig3}
\end{figure}

For \emph{open} face guards, we have a different construction, shown in Figure~\ref{fig4}. There are six faces for each of the large flat cuboids, and no open face can guard the center of two different flat cuboids. Therefore, one guard is needed for each of the flat cuboids, and this amounts to $\lfloor f/6\rfloor$ open face guards.

\begin{figure}[h]
\centering{\includegraphics[width=.5\linewidth]{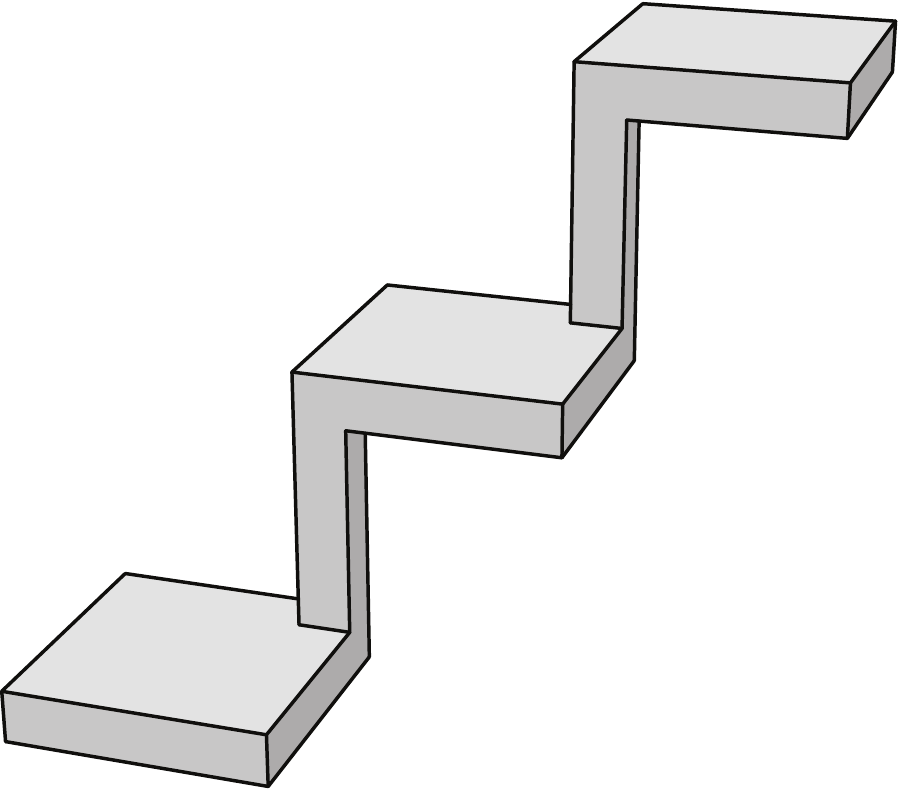}}
\caption{Orthogonal polyhedron that needs $\lfloor f/6\rfloor$ open face guards}
\label{fig4}
\end{figure}

Plugging $c=3$ in Theorem~\ref{t3:face} reveals that our lower bound is also tight.

\begin{theorem}
To guard an orthogonal polyhedron having $f$ faces, $\lfloor f/6\rfloor$ open face guards are always sufficient and occasionally necessary. \hfill $\square$
\end{theorem}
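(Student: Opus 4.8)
The plan is to derive this as an immediate consequence of Theorem~\ref{t3:face} together with the construction of Figure~\ref{fig4}. For the sufficiency half, I would start from the observation that an orthogonal polyhedron is by definition $c$-oriented with $c=3$, since all of its faces are orthogonal to one of the three coordinate axes. Specializing Theorem~\ref{t3:face} to $c=3$ then says that $\lfloor f/2 - f/3\rfloor$ closed or open face guards always suffice; and since $f/2 - f/3 = f/6$ as an exact rational identity, the floor is precisely $\lfloor f/6\rfloor$. In particular $\lfloor f/6\rfloor$ \emph{open} face guards are always enough, which is the ``sufficient'' part of the statement.

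For the necessity half, I would invoke the family of orthogonal polyhedra sketched in Figure~\ref{fig4}, which is built from $k$ large flat cuboids joined by thin connectors so that the total face count is $f$ with $k=\lfloor f/6\rfloor$ (each flat cuboid contributing six faces, and the connectors being chosen so as to keep this count exact). The key geometric fact, already spelled out in the paragraph preceding the theorem, is that the center point of any one flat cuboid is visible only from open faces of that same cuboid, and that no single open face sees the centers of two distinct flat cuboids. Hence any guarding set of open faces must contain a face from each flat cuboid, forcing at least $k=\lfloor f/6\rfloor$ open face guards; this gives the ``occasionally necessary'' part. Combining the two halves yields $g(f)=\lfloor f/6\rfloor$ for orthogonal polyhedra with open face guards.

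I do not anticipate any real obstacle: the upper bound is a one-line specialization of a theorem that has already been proved, and the lower bound relies only on the flat-cuboid visibility argument that has already been sketched. The only point that needs a moment's care is the bookkeeping in Figure~\ref{fig4}, namely checking that the connectors can be designed so that $k=\lfloor f/6\rfloor$ is attained for every admissible value of $f$ (and, if $f$ is not a multiple of $6$, that the few leftover faces do not create an extra flat cuboid that would need its own guard); this is routine and can be dispatched with a short remark.
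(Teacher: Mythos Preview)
Your proposal is correct and matches the paper's own argument: sufficiency is obtained by plugging $c=3$ into Theorem~\ref{t3:face}, and necessity comes from the Figure~\ref{fig4} construction together with the observation that no open face sees the centers of two distinct flat cuboids. The only minor discrepancy is descriptive---in Figure~\ref{fig4} the flat cuboids are stacked directly (the polyhedron is in fact a 2-reflex orthostack) rather than joined by thin connectors, so each cuboid contributes exactly six faces and no extra bookkeeping is needed---but this does not affect your reasoning.
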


Moving on to \emph{closed} face guards in 4-oriented polyhedra, we propose the construction in Figure~\ref{fig5}. Each closed face sees the tip of at most one of the $k$ tetrahedral \emph{spikes}, hence $k$ guards are needed. Because there are $5k+2$ faces in total, a lower bound of $\lfloor f/5\rfloor$ closed face guards follows.

\begin{figure}[h]
\centering
\includegraphics[width=.75\linewidth]{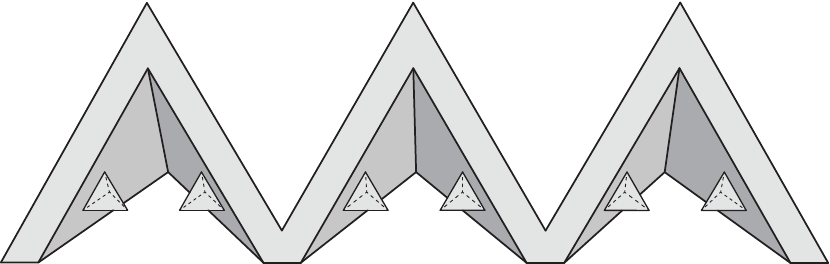}
\caption{4-oriented polyhedron that needs $\lfloor f/5\rfloor$ closed face guards}
\label{fig5}
\end{figure}

For \emph{open} face guards in 4-oriented polyhedra, we modify the previous example by carefully placing additional spikes on the other side of the construction, as Figure~\ref{fig6} illustrates. Once again, since each open face sees the tip of at most one of the $k$ spikes and there are $4k+2$ faces in total, a lower bound of $\lfloor f/4\rfloor$ open face guards follows.

\begin{figure}[h]
\centering
\includegraphics[width=.75\linewidth]{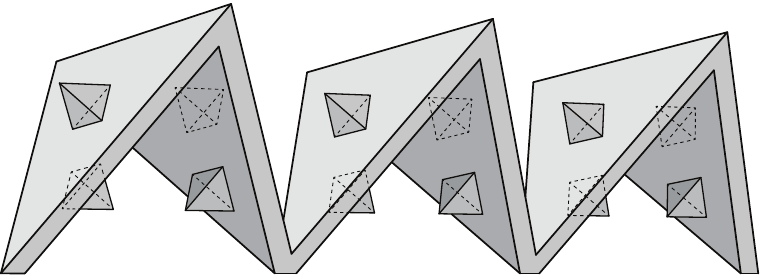}
\caption{4-oriented polyhedron that needs $\lfloor f/4\rfloor$ open face guards}
\label{fig6}
\end{figure}

This bound is also tight, as easily seen by plugging $c=4$ in Theorem~\ref{t3:face}.

\begin{theorem}
To guard a 4-oriented polyhedron having $f$ faces, $\lfloor f/4\rfloor$ open face guards are always sufficient and occasionally necessary. \hfill $\square$
\end{theorem}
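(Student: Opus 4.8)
The plan is to assemble the statement from the two ingredients developed just above. For the sufficiency half I would simply specialize Theorem~\ref{t3:face} to $c=4$: since $f/2 - f/4 = f/4$, that theorem guarantees that any $4$-oriented polyhedron with $f$ faces is guardable by $\lfloor f/4\rfloor$ face guards, and the statement there explicitly allows these to be open. So this direction is a one-line consequence with nothing left to check.

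For the necessity half I would point to the family of polyhedra in Figure~\ref{fig6}, obtained from the $\lfloor f/5\rfloor$-construction of Figure~\ref{fig5} by adding tetrahedral spikes on the opposite side as well. The key points to verify are: (i) the modified polyhedron is still $4$-oriented, i.e., the extra spikes reuse the four existing face orientations; (ii) with $k$ spikes the face count is exactly $f = 4k+2$; and (iii) no open face guard can see the tips of two distinct spikes. For (iii), each spike tip is a vertex of the polyhedron lying on the relative boundary of every face of that spike, so an open face guard sitting on such a face does not contain the tip; and the spikes are placed deeply enough in thin pockets that every other face from which a given tip is visible is blocked from all the other tips. Consequently at least one open face guard is forced per spike, so at least $k$ guards are needed, and since $f = 4k+2$ we get $k = \lfloor f/4\rfloor$.

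The main obstacle is the geometric bookkeeping in the necessity construction: one must arrange the two rows of spikes and the thin pockets around them so that conditions (i)--(iii) hold simultaneously --- in particular so that doubling the spikes does not inadvertently create a large slab face that sees two tips at once, and so that the total face count lands precisely on $4k+2$ rather than $4k+O(1)$ with a different additive constant. Once Figure~\ref{fig6} is verified, the theorem follows by combining it with the $c=4$ instance of Theorem~\ref{t3:face}, and the bound is seen to be tight since $f = 4k+2$ yields $\lfloor f/4\rfloor = k$, matching the upper and the lower bound.
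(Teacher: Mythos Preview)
Your approach is exactly the paper's: sufficiency by specializing Theorem~\ref{t3:face} to $c=4$, and necessity via the family in Figure~\ref{fig6} with $k$ spikes and $4k+2$ faces. That is all the paper does, too (the theorem is stated with a $\square$ and no further argument beyond the preceding paragraph).

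One point in your justification of~(iii) is off, though. You write that the tip of a spike lies on the relative boundary of each face of that spike, ``so an open face guard sitting on such a face does not contain the tip''. That is true, but it does not help: an open face still \emph{sees} every point of its closure, since the segment from an interior point to a boundary vertex stays on the face and hence never enters the exterior of the polyhedron. So the open faces of a spike certainly do see their own tip; the content of~(iii) is that they see \emph{no other} tip, and that every remaining face of the polyhedron sees at most one tip. Both of these facts are consequences of the geometric placement (the ``careful'' alternating arrangement of spikes in Figure~\ref{fig6}), not of the open/closed distinction at the tip itself. Once you replace the containment remark with this visibility argument, your verification of~(iii) is fine and the proof goes through as in the paper.
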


\subsection{2-reflex orthostacks}

In our pursuit to lower the $\lfloor f/6\rfloor$ upper bound on closed face guards in orthogonal polyhedra, in order to match it with the $\lfloor f/7\rfloor$ lower bound, we study a special class of orthogonal polyhedra. Recall that the lower bound example in Figure~\ref{fig3} has no vertical reflex edges, but only reflex edges in two horizontal directions. These orthogonal polyhedra are called \emph{2-reflex}, and were first introduced by the author in~\cite{thesis}, and studied in conjunction with edge guards.

We further restrict our analysis to 2-reflex polyhedra that are also \emph{orthostacks}, as defined in~\cite{orthostacks}. An orthostack is an orthogonal polyhedron whose horizontal cross sections are simply connected. Therefore, a 2-reflex orthostack can be naturally viewed as a pile of cuboidal \emph{bricks} of various sizes, stacked on top of each other.

Our motivation for studying 2-reflex orthostacks is that they constitute the most obvious building block for 2-reflex polyhedra. In turn, 2-reflex polyhedra are a natural class of polyhedra of intermediate complexity between orthogonal prisms and orthogonal polyhedra. While 2-reflex orthostacks form a very basic class of polyhedra, they already pose some challenges, and we perceive them as a necessary and critical step toward solving the general face-guarding problem for orthogonal polyhedra.

\paragraph{Lower bounds.}

Observe that the polyhedron in Figure~\ref{fig4} is already a 2-reflex orthostack. Along with Theorem~\ref{t3:face}, this yields a tight bound of $\lfloor f/6\rfloor$ \emph{open} face guards in 2-reflex orthostacks.

On the other hand, the one in Figure~\ref{fig3}, despite being 2-reflex, is not an orthostack. However, the example in Figure~\ref{fig4} can be used once again to obtain a good lower bound on \emph{closed} face guards, as well.

\begin{figure}[h]
\centering
\includegraphics[width=.5\linewidth]{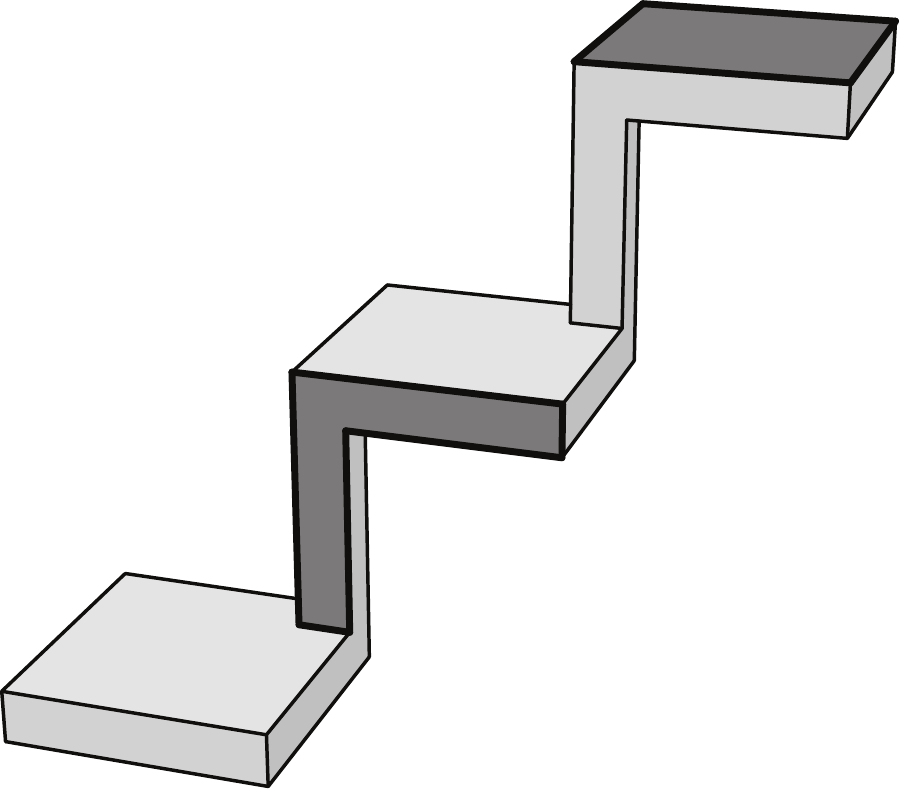}
\caption{2-reflex orthostack that needs $\lfloor (f+3)/9\rfloor$ closed face guards}
\label{figlowerortho}
\end{figure}

As Figure~\ref{figlowerortho} indicates, each triplet of consecutive bricks can be guarded by a single closed face guard. On the other hand, no closed face guard can see the centers of four bricks. If there are $k$ bricks in total, then there are $f=3k+3$ faces, and $g=\lceil k/3\rceil=\lfloor (k+2)/3\rfloor$ open face guards are needed. By substituting for $k$, we get $g=\lfloor (f+3)/9\rfloor$.

\paragraph{Upper bounds.}

We have already given a tight bound for open face guards, so let us consider closed face guards now. Note that two adjacent bricks of a 2-reflex orthostack share a horizontal rectangle, which we call the \emph{contact rectangle} between the two bricks. In general, each contact rectangle is coplanar with at least one horizontal face of the orthostack. If a contact rectangle is coplanar with exactly one face of the orthostack, such a contact rectangle is said to be \emph{canonical}.

Observe that any non-canonical contact rectangle between two bricks, having $k>1$ coplanar faces, can always be converted into $k$ canonical ones, by suitably adding new bricks between the initial two. More specifically, ``extruding'' the initial contact rectangle into a new brick, as shown in Figure~\ref{figstretch}, allows to separate the horizontal faces facing up from those facing down. Subsequently, if two coplanar up-facing (respectively, down-facing) faces are still present, one of them can be ``lifted'' (respectively, ``lowered'') via the addition of another brick.

\begin{figure}[h]
\centering
\includegraphics[width=.75\linewidth]{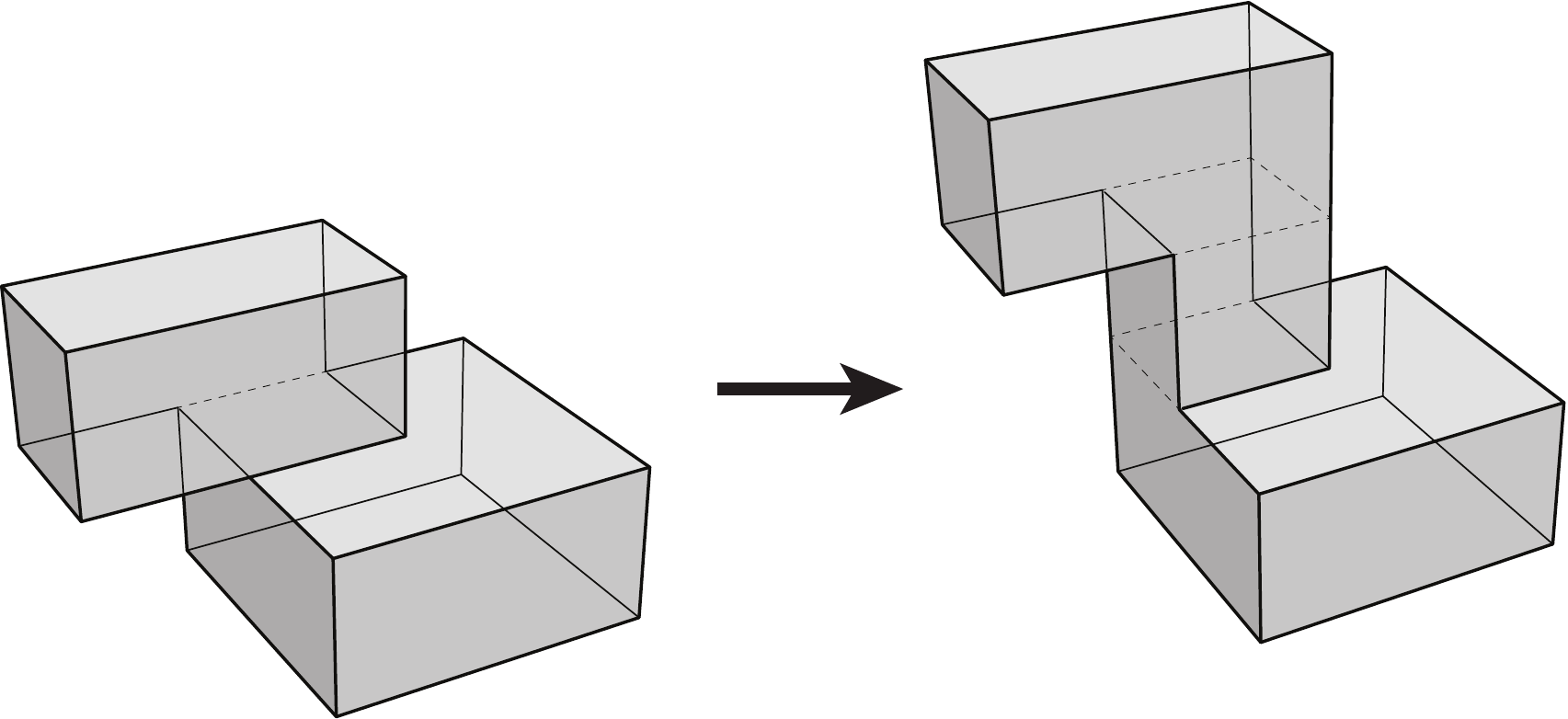}
\caption{Converting a generic contact rectangle into two canonical ones}
\label{figstretch}
\end{figure}

Note that this transformation does not change the number of faces of the polyhedron. Also, after the transformation, some visibilities between points may be lost, but none is gained. Therefore, if the resulting polyhedron is guardable by $g$ closed face guards, then the original polyhedron is guardable by the $g$ ``corresponding'' closed face guards.

Hence, in the following, we will restrict our analysis to 2-reflex orthostacks with canonical contact rectangles only. Depending to the relative positions of two bricks, the contact rectangle between them can be of one of four different \emph{types}, which are illustrated in Figure~\ref{fig2stack}. Each type is characterized by how many (reflex) edges are shared between the two adjacent bricks: a contact rectangle of type~$i$ has exactly $i$ edges of the orthostack on its perimeter, with $1\leqslant i\leqslant 4$. It is straightforward to see that there are no other possible configurations for a canonical contact rectangle.

\begin{figure}[h]
\centering
\subfigure[Type 1: $f=8$, $\Delta=4$]{\label{fig2stack:a}\includegraphics[scale=.45]{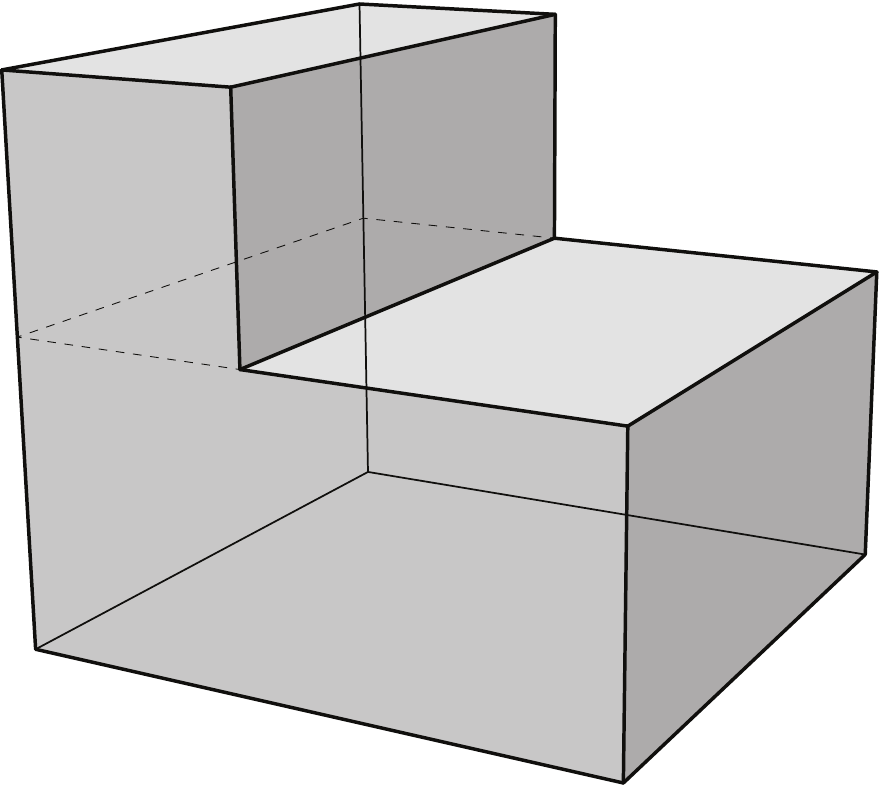}}\qquad\qquad\quad
\subfigure[Type 2: $f=9$, $\Delta=3$]{\label{fig2stack:c}\includegraphics[scale=.45]{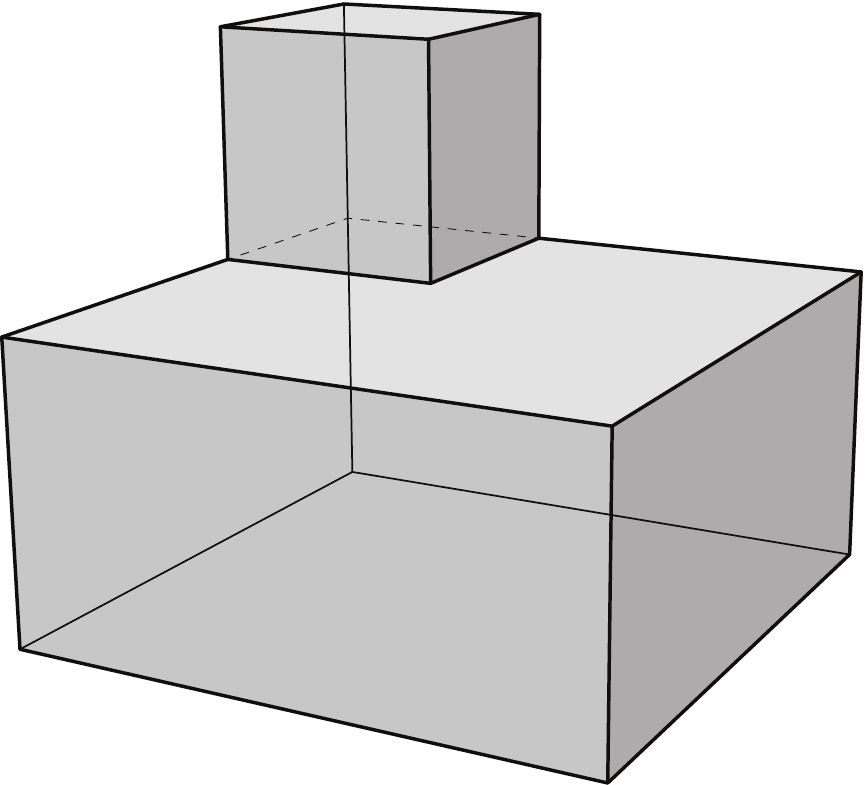}}\\
\subfigure[Type 3: $f=10$, $\Delta=2$]{\label{fig2stack:d}\includegraphics[scale=.45]{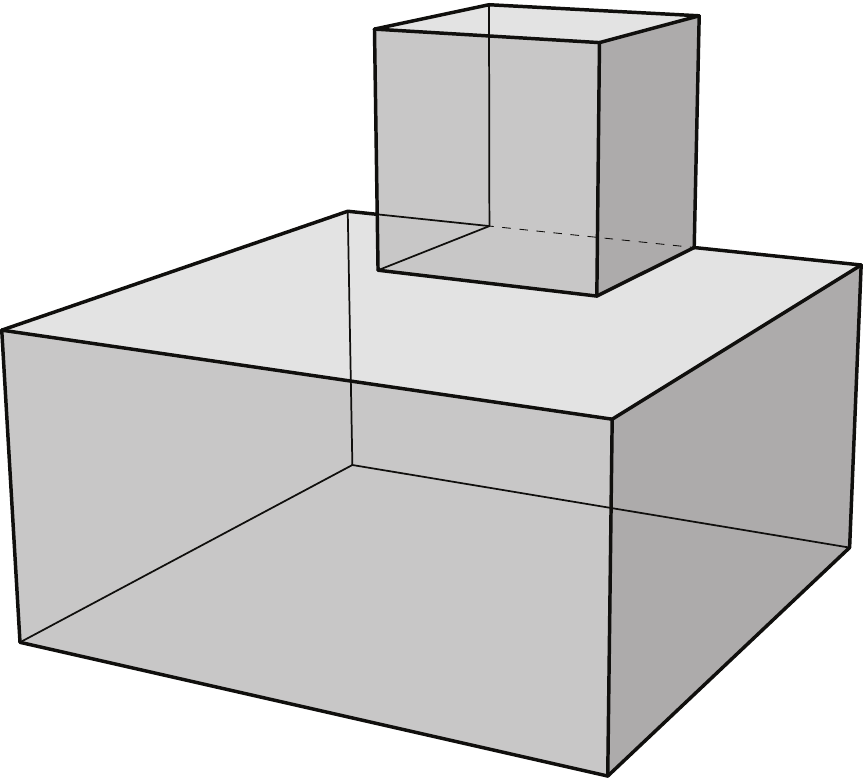}}\qquad\qquad\quad
\subfigure[Type 4: $f=11$, $\Delta=1$]{\label{fig2stack:e}\includegraphics[scale=.45]{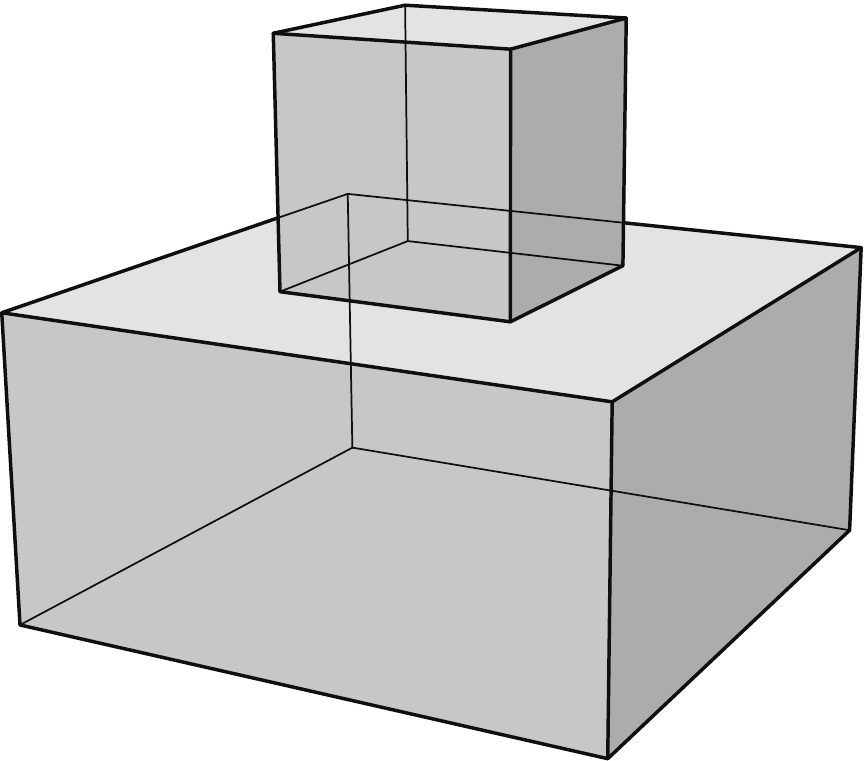}}
\caption{Types of canonical contact rectangles}
\label{fig2stack}
\end{figure}

When two bricks are attached on top of each other and a canonical contact rectangle is formed, some of their faces disappear or are merged together, and therefore the total number of faces decreases. The quantity by which it decreases is called the \emph{deficit} of the contact rectangle, and it only depends on its type. In Figure~\ref{fig2stack}, the number of faces of each orthostack consisting of two bricks is denoted by $f$, and the deficit of the corresponding contact rectangle is denoted by $\Delta$. Observe that $f+\Delta=12$, and that the deficit of a contact rectangle of type~$i$ is $\Delta=5-i$. It follows that the number of faces of a 2-reflex othostack with $k$ bricks depends only on the types of its $k-1$ contact rectangles, and this number is $6k$ minus the sum of the deficits of the contact rectangles. This is also equal to $k+5$ plus the sum of the types of the contact rectangles.

Let two adjacent bricks be given, sharing a canonical contact rectangle of type~$i$, with $1\leqslant i\leqslant 4$. Note that the vertical projection of one of the two bricks is strictly contained in the vertical projection of the other brick. If the vertical projection of the lower (respectively, upper) brick is contained in the vertical projection of the upper (respectively, lower) brick, we denote the configuration by the symbol $\ls{i}$ (respectively, $\rs{i}$). Similarly, we indicate a stack of arbitrarily many bricks by a sequence of labeled $\mathlarger\bigsqcup$ and $\mathlarger\bigsqcap$ symbols, and we call this sequence the \emph{signature} of the orthostack. For instance, the signature of the orthostack in Figure~\ref{figlowerortho} is $$\rs{2}\ls{2}\rs{2}\ls{2}.$$

The symbol $\lrs{i}$ is shorthand for ``$\ls{i}$ or $\rs{i}$''. Similarly, $\lrs{i}\lrs{j}$ stands for $$\mbox{``$\ls{i}\ls{j}$ or $\ls{i}\rs{j}$ or $\rs{i}\ls{j}$ or $\rs{i}\rs{j}$'',}$$ and so on.

As it turns out, to obtain our upper bound on closed face guards, we can almost entirely abstract from the actual shapes of 2-reflex orthostacks, and just reason about their signatures.

\begin{proposition}
Any 2-reflex orthostack whose signature is of the form $\rs{i}\rs{j}$, with $j\neq 4$, is guardable by one vertical closed face guard.
\end{proposition}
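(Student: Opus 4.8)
The plan is to exhibit the guard directly, as a single wall that two consecutive bricks have in common. Denote the three bricks of the orthostack, from bottom to top, by $B_1$, $B_2$, $B_3$, and write $\pi(\cdot)$ for vertical projection onto the $xy$-plane. By the meaning of the symbols, the signature $\rs{i}\rs{j}$ says precisely that $\pi(B_3)\subsetneq\pi(B_2)\subsetneq\pi(B_1)$, so that each brick is strictly narrower than the one below it. As agreed in the preceding discussion, all contact rectangles may be assumed canonical. Since the contact between $B_2$ and $B_3$ has type $j\leqslant 3$, at least $4-j\geqslant 1$ sides of $\pi(B_3)$ coincide with the corresponding sides of $\pi(B_2)$; rotating about the vertical axis if necessary, assume one such coinciding side is the left one, so that $B_2$ and $B_3$ have a common left $x$-coordinate, say $x=a$. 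The guard will be the face $F$ of the orthostack lying in the vertical plane $x=a$, taken closed. Since the left walls of $B_2$ and $B_3$ are coplanar and share a horizontal edge, both are part of $F$ (and $F$ additionally contains the left wall of $B_1$ in the sub-case that $B_1$ is also flush there); with only three bricks, looking at the cross-sections of the solid by the plane $x=a$ shows that $F$ is a single vertical face.

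Next I would check that $F$ sees each of $B_1$, $B_2$, $B_3$, which together exhaust the polyhedron. For $B_2$ and $B_3$ this is immediate: $F$ contains an entire face of the box $B_2$, and every point of a convex body is visible from each of its faces, so $F$ sees all of $B_2$; the same argument with $B_3$ settles $B_3$. For $B_1$ I would use the bottom edge of $F$, the segment $\{a\}\times[c_2,d_2]\times\{z_1\}$ where $z_1$ is the height of the top of $B_1$ (equivalently the bottom of $B_2$) and $[c_2,d_2]$ is the $y$-extent of $B_2$. Because $\pi(B_2)\subseteq\pi(B_1)$, this segment lies on the top face of $B_1$, so by convexity of $B_1$ every point of it sees all of $B_1$ (and in the sub-case that $F$ also contains $B_1$'s left wall, this is even more direct). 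Combining the three cases, the closed face $F$ guards $B_1\cup B_2\cup B_3$.

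The one place where real care is needed is the choice of face: a wall of $B_1$ alone cannot see the top of $B_3$, and a wall of $B_3$ alone cannot see the part of $B_1$ sticking out from under $B_2$, so the guard has to be a wall belonging to two consecutive bricks at once --- which is exactly what the hypothesis $j\neq 4$ delivers. If $j=4$, the top brick is strictly interior to the middle brick along all four sides, no wall of $B_2$ continues up into $B_3$, the construction collapses, and indeed one can build $j=4$ orthostacks admitting no single \emph{vertical} face guard (though a horizontal face such as the bottom of $B_1$ may still do). The remaining points --- that $F$ is genuinely a face of the orthostack, and that ``visible within $B_k$'' implies ``visible within the orthostack'' since each $B_k$ is a convex subset of it and all the sight segments used stay inside a single $B_k$ --- are routine.
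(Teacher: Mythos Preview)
Your argument is correct and follows the same idea as the paper's proof: pick a vertical wall shared by the two top bricks (which exists precisely because $j\neq 4$), observe that its closure touches the bottom brick since $\pi(B_2)\subset\pi(B_1)$, and conclude by convexity of each brick. The paper compresses all of this into two sentences and a figure reference, whereas you spell out the visibility verification for each $B_k$ and discuss the role of the hypothesis $j\neq 4$ explicitly; but the underlying construction and reasoning are the same.
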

\begin{proof}
If the top contact rectangle is not of type~4, there is a vertical face that is shared by the two top bricks, as Figure~\ref{fig7:b} exemplifies. This face also touches the bottom brick, and so it guards the entire orthostack.
\end{proof}

\begin{proposition}
Any 2-reflex orthostack with signature of the form $\rs{i}\ls{j}$ is guardable by one vertical closed face guard.
\end{proposition}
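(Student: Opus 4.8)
The plan is to show that a single vertical wall of the \emph{middle} brick already guards the whole orthostack, in the same spirit as the preceding proposition but now with no restriction on the types. Write the three bricks of the orthostack as $B_1$ (bottom), $B_2$ (middle) and $B_3$ (top). In the signature $\rs{i}\ls{j}$, the symbol $\rs{i}$ describes the contact between $B_1$ and $B_2$ and says that the footprint (vertical projection) of $B_2$ is contained in that of $B_1$; the symbol $\ls{j}$ describes the contact between $B_2$ and $B_3$ and says that the footprint of $B_2$ is contained in that of $B_3$. Thus the structural fact driving the proof is that the footprint of the middle brick is contained in the footprints of \emph{both} of its neighbours. This is precisely what is missing in the nested case $\rs{i}\rs{j}$, where $B_2$ contains $B_3$'s footprint rather than being contained in it, forcing the extra hypothesis $j\neq 4$; here no such hypothesis is needed.

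First I would fix an arbitrary vertical wall $W$ of $B_2$ and take $F$ to be the face of the orthostack containing $W$; since $F\supseteq W$, it suffices to prove that $W$ guards the orthostack, so I argue with $W$. The argument splits into three parts. (1) $W$ is a full wall of the convex box $B_2$, so $\mathcal V(W)\supseteq B_2$. (2) $\mathcal V(W)\supseteq B_1$: the lower edge $e$ of $W$ sits at the height of the top face of $B_1$, and its vertical projection is a segment on the boundary of $B_2$'s footprint, hence contained in $B_1$'s footprint. For any $p\in B_1$ and any $q\in e$, the segment $pq$ has $z$-coordinate never exceeding the height of $B_1$'s top face and projects into $B_1$'s (convex, rectangular) footprint, so $pq\subseteq B_1$; therefore $p$ sees $q\in e\subseteq W$, and since the guard is \emph{closed} the boundary point $q$ is part of it. (3) The symmetric argument applied to the upper edge of $W$ --- which lies at the height of $B_3$'s bottom face and projects into $B_3$'s footprint, again because $B_2$'s footprint is the smaller one --- gives $\mathcal V(W)\supseteq B_3$. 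Combining, $\mathcal V(F)\supseteq\mathcal V(W)\supseteq B_1\cup B_2\cup B_3$, the whole orthostack.

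The only step that is not immediate is the visibility claim in part (2) and its mirror in part (3), and even that is routine: it uses nothing beyond the convexity of a single brick and the containment of $B_2$'s footprint in $B_1$'s (resp.\ $B_3$'s), so it is completely insensitive to the values of $i$ and $j$ --- which is exactly why the hypothesis $j\neq 4$ can be dropped here. The remaining points are bookkeeping: checking that $W$ is genuinely contained in a single \emph{vertical} face of the orthostack, whether or not that face happens to be merged with a coplanar wall of $B_1$ or $B_3$, and that the relevant heights and footprints are as described once we restrict to canonical contact rectangles. I do not expect any of this to pose a difficulty.
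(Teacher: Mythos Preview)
Your proposal is correct and follows exactly the paper's approach: the paper's proof is the single sentence ``It is sufficient to choose any vertical face of the middle brick; this face sees also the top and bottom bricks,'' and you have spelled out precisely the convexity-and-footprint-containment reasoning that justifies it. Your remark that the closed boundary edge of $W$ is what does the work in parts~(2) and~(3), and your bookkeeping about $W$ possibly being merged into a larger face $F$, are both accurate refinements of what the paper leaves to the figure.
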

\begin{proof}
It is sufficient to choose any vertical face of the middle brick. As Figure~\ref{fig7:c} suggests, this face sees also the top and bottom bricks.
\end{proof}

\begin{figure}[h]
\centering
\subfigure[$\rs{2}\rs{3}$]{\label{fig7:b}\includegraphics[scale=.4]{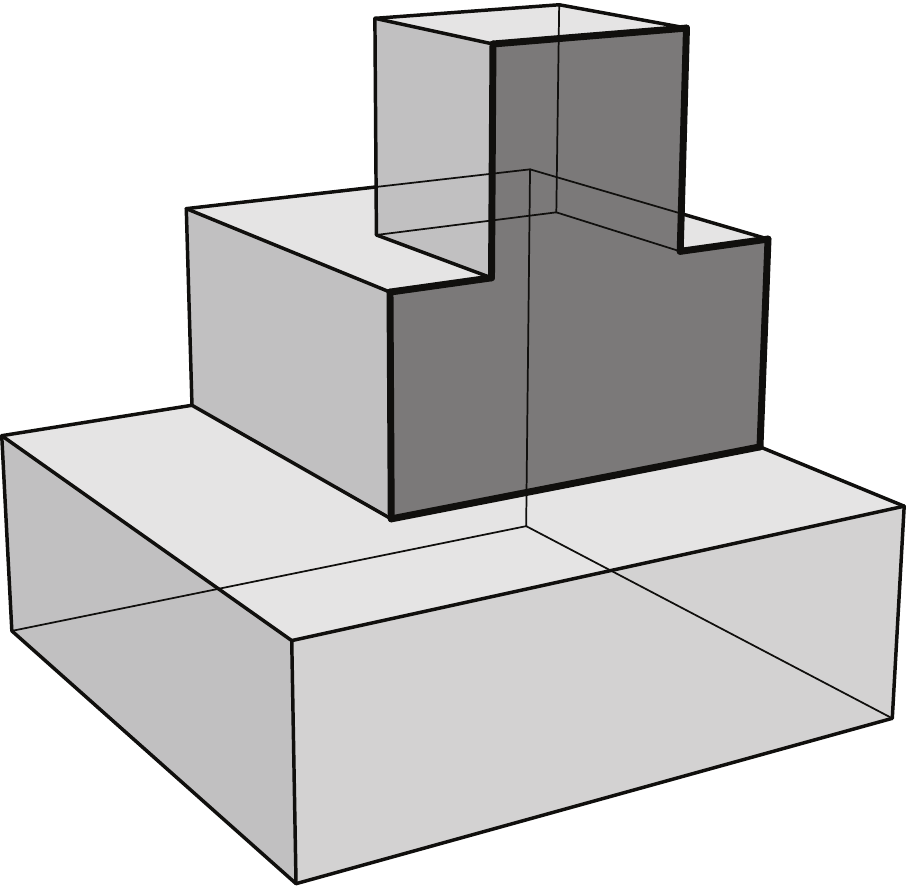}}\qquad
\subfigure[$\rs{4}\ls{4}$]{\label{fig7:c}\includegraphics[scale=.4]{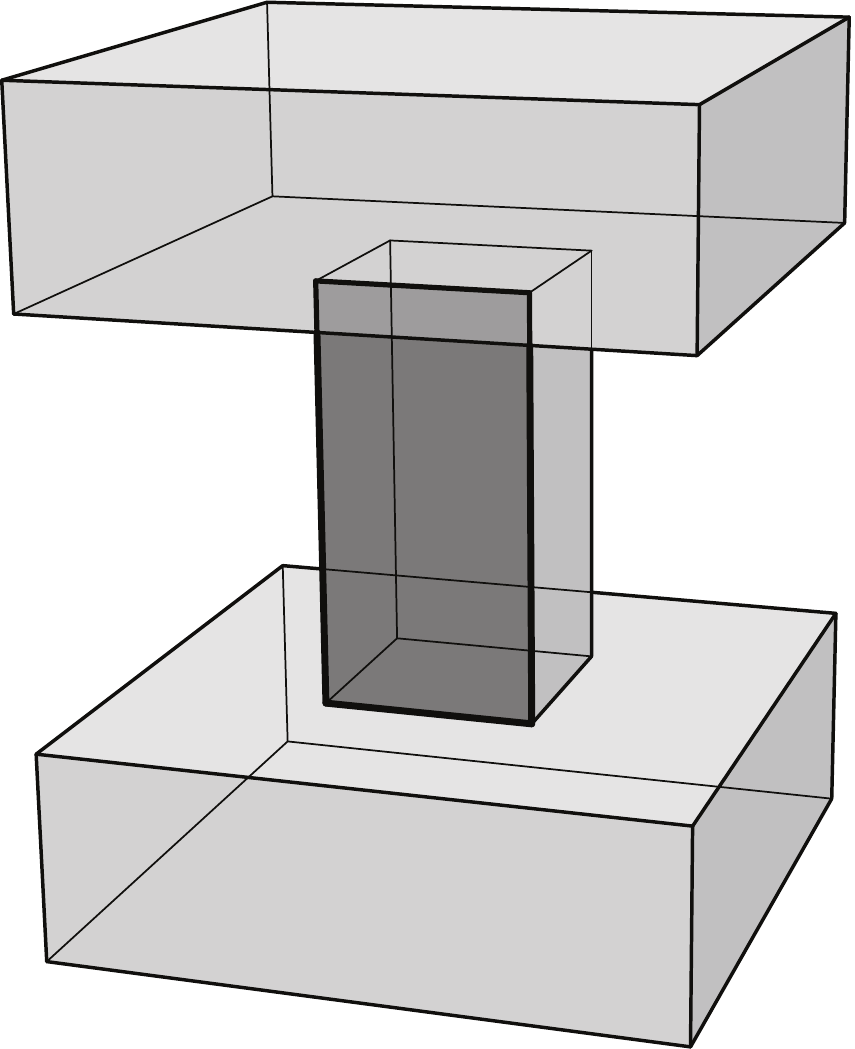}}\qquad
\subfigure[$\ls{1}\rs{2}$]{\label{fig7:d}\includegraphics[scale=.4]{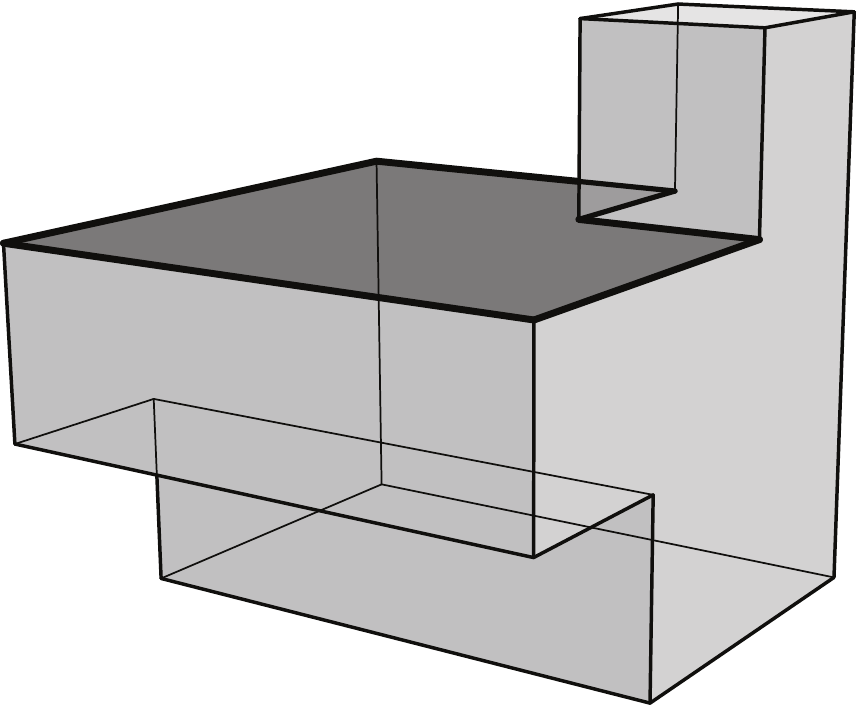}}
\caption{Guarding some 3-brick orthostacks with a single guard}
\label{fig7}
\end{figure}

\begin{proposition}
Any 2-reflex orthostack with signature of the form $\ls{i}\rs{j}$ is guardable by one horizontal closed face guard that is not the topmost nor the bottommost face of the orthostack.
\end{proposition}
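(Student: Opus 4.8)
\emph{Proof proposal.} The plan is to take as the guard one of the two horizontal faces of the \emph{widest} brick, and to exploit the fact that a \emph{closed} horizontal face carries a reflex edge of the orthostack on its relative boundary, through which the brick lying directly below (or above) it becomes visible.

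First I would read the geometry off the signature (writing $\mathrm{proj}(\cdot)$ for vertical projection). A stack with signature $\ls{i}\rs{j}$ has three bricks $B_1,B_2,B_3$ from bottom to top, and by the meaning of the symbols $\ls{}$ and $\rs{}$ the rectangle $\mathrm{proj}(B_2)$ strictly contains both $\mathrm{proj}(B_1)$ and $\mathrm{proj}(B_3)$, so $B_2$ is the widest brick. Hence the top faces of $B_1$ and $B_3$ are entirely covered and lie in the interior of the orthostack, and the only horizontal faces that are neither topmost nor bottommost are the bottom face $F$ of $B_2$ and the top face $F'$ of $B_2$. Reflecting the orthostack across a horizontal plane turns a signature of the form $\ls{i}\rs{j}$ into another one of the same form, exchanges $F$ with $F'$, and exchanges $B_1$ with $B_3$; using this symmetry I would reduce to the case in which $\mathrm{proj}(B_3)$ is not strictly contained in $\mathrm{proj}(B_1)$ and prove that the closed face $F$ guards the orthostack (the picture to keep in mind is Figure~\ref{fig7:d}).

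The core of the argument is then to see that $F$ views each brick. Since the contact rectangle between $B_1$ and $B_2$ has type $i\geqslant1$, at least one side of $\mathrm{proj}(B_1)$ lies in the interior of $\mathrm{proj}(B_2)$, and the corresponding horizontal edge of the orthostack — at the height of the top of $B_1$ — lies on the relative boundary of the closed face $F$ and is reflex. A point $p$ on this edge lies on the boundary of the convex brick $B_1$ and on the boundary of the convex brick $B_2$, so every segment from $p$ into $B_1$ stays in $B_1$ and every segment from $p$ into $B_2$ stays in $B_2$; thus $\mathcal V(p)\supseteq B_1\cup B_2$. For $B_3$ I would pick instead a point $q\in F$ with $\mathrm{proj}(q)\in\mathrm{proj}(B_3)\setminus\mathrm{proj}(B_1)$ — available under the reduction above, while in the degenerate situation $\mathrm{proj}(B_1)=\mathrm{proj}(B_3)$ the reflex-edge point $p$ already covers $B_3$ as well; for any $x\in B_3$, the segment $qx$ meets the horizontal plane through the top of $B_2$ at a point whose vertical projection is a convex combination of $\mathrm{proj}(q)$ and $\mathrm{proj}(x)$, both of which lie in the rectangle $\mathrm{proj}(B_3)$, hence in $\mathrm{proj}(B_3)$ itself; so that crossing point belongs to the contact rectangle between $B_2$ and $B_3$, the segment $qx$ lies in $B_2$ below that plane and in $B_3$ above it, and $x$ is guarded. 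Since $F$ is evidently neither the topmost nor the bottommost face, this finishes the proof, and the reflected case is handled by $F'$ word for word.

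The step I expect to be the real obstacle is the choice between $F$ and $F'$, that is, the symmetry reduction itself: when $\mathrm{proj}(B_1)$ and $\mathrm{proj}(B_3)$ are nested, only the horizontal face on the ``outer'' side can tunnel through $B_2$ to reach the inner brick, and one must notice that reflection across a horizontal plane keeps the signature in the class $\ls{i}\rs{j}$ so that no genuine case analysis is incurred. Everything else reduces to convexity of each individual brick together with the standing fact that a closed face guard includes the reflex edges bounding it.
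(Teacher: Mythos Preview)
Your argument is correct and is essentially the paper's proof with the details filled in: the paper also picks one of the two horizontal faces bordering the middle brick, choosing the bottom face $F$ when $\mathrm{proj}(B_3)$ meets it and the top face $F'$ otherwise, which is exactly your symmetry reduction unrolled into two cases. One small slip: you write ``the top faces of $B_1$ and $B_3$ are entirely covered'' where you mean the top face of $B_1$ and the \emph{bottom} face of $B_3$.
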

\begin{proof}
Let us consider the two horizontal faces that border the middle brick: we show that one of them guards the whole polyhedron. If the vertical projection of the top brick is entirely contained in the vertical projection of the bottom brick, as Figure~\ref{fig7:d} shows, we pick the top face of the middle brick. Otherwise, the vertical projection of the top brick intersects the bottom face of the middle brick. Hence we may pick this face, as it guards all three bricks.
\end{proof}

From the three previous propositions, we straightforwardly obtain the following.

\begin{lemma}\label{l:stack1}
Any 2-reflex orthostack made of three bricks, whose signature is neither of the form $\rs{i}\rs{4}$ nor $\ls{4}\ls{i}$, is guardable by one closed face guard that is not the topmost nor the bottommost horizontal face of the orthostack.\hfill\qed
\end{lemma}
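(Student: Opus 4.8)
The plan is to reduce Lemma~\ref{l:stack1} to the three preceding propositions by a short case analysis on the signature $\lrs{i}\lrs{j}$ of the three-brick orthostack. First I would enumerate the four possible shapes of the signature according to the directions of the two symbols: $\rs{i}\rs{j}$, $\rs{i}\ls{j}$, $\ls{i}\rs{j}$, and $\ls{i}\ls{j}$. The first three propositions already handle three of these families almost verbatim; the point is to observe that the guard they produce is, in each case, \emph{not} the topmost or bottommost horizontal face. For the $\rs{i}\rs{j}$ case (with $j\neq 4$) and the $\rs{i}\ls{j}$ case, the propositions give a \emph{vertical} face guard, which trivially is neither the top nor the bottom horizontal face, so those cases are immediate once we exclude $\rs{i}\rs{4}$. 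For the $\ls{i}\rs{j}$ case, the proposition explicitly delivers a horizontal guard ``that is not the topmost nor the bottommost face,'' so there is nothing more to check there.

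The remaining family is $\ls{i}\ls{j}$, i.e., both projections nest ``upward.'' By symmetry (reflecting the orthostack through a horizontal plane swaps $\ls{}$ with $\rs{}$ and reverses the order of symbols), the signature $\ls{i}\ls{j}$ read from the bottom up is the mirror image of $\rs{j}\ls[-1.0ex]{i}$... more precisely $\ls{i}\ls{j}$ becomes $\rs{j}\rs{i}$ after a vertical flip. Hence an orthostack with signature $\ls{i}\ls{j}$ and $i\neq 4$ is, up to reflection, one with signature $\rs{j}\rs{i}$ with $i\neq 4$, which is covered by Proposition on $\rs{i}\rs{j}$ with $j\neq 4$; the resulting vertical guard again is neither top nor bottom. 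So the only signatures not covered are exactly $\rs{i}\rs{4}$ and (its mirror) $\ls{4}\ls{i}$, which are precisely the two excluded cases in the statement. Assembling these observations gives the lemma.

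The main obstacle, such as it is, is purely bookkeeping: making sure the ``not topmost nor bottommost horizontal face'' clause is genuinely satisfied in every case, including after applying the vertical-reflection symmetry, and confirming that the excluded set $\{\rs{i}\rs{4},\ \ls{4}\ls{i}\}$ really is the complement of the union of the three proposition hypotheses. One subtlety to state carefully is why a vertical face counts as ``not the topmost nor bottommost horizontal face'' — it is simply not horizontal, so the clause holds vacuously for it; I would spell this out in one sentence so the phrasing of the lemma is not misread as requiring a horizontal guard. Another point worth a remark is that the vertical-flip symmetry is a genuine congruence of 2-reflex orthostacks (it preserves orthogonality, the orthostack property, and canonicity of contact rectangles), so invoking it to deduce the $\ls{i}\ls{j}$ case from the $\rs{j}\rs{i}$ case is legitimate. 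With those remarks in place, the proof is a two-line aggregation: ``By Propositions~\dots and the vertical reflection symmetry, every signature other than $\rs{i}\rs{4}$ and $\ls{4}\ls{i}$ falls into one of the three handled families, and in each the exhibited guard is neither the topmost nor the bottommost horizontal face.''
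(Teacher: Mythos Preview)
Your proposal is correct and follows exactly the approach the paper intends: the paper does not spell out a proof at all, merely stating that the lemma follows ``straightforwardly'' from the three preceding propositions. Your case analysis on the four signature shapes, together with the vertical-reflection symmetry to reduce $\ls{i}\ls{j}$ to $\rs{j}\rs{i}$, is precisely the routine verification the paper leaves to the reader, and your check that the exhibited guard is never the topmost or bottommost horizontal face is sound.
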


\begin{lemma}\label{l:stack2}
Any 2-reflex orthostack with signature $\lrs{1}\lrs{1}\lrs{1}$ is guardable by one vertical closed face guard.
\end{lemma}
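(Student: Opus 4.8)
The plan is to produce, for any such orthostack, a single vertical face that is adjacent to all four bricks; by convexity of the bricks, that one face will then guard everything. The first thing I would record is what a type-$1$ contact means once we have reduced to canonical contact rectangles. By definition a type-$1$ contact rectangle carries exactly one edge of the orthostack on its perimeter, so the smaller of the two stacked brick rectangles must coincide with the larger one along three of its four sides, the remaining side carrying that unique reflex edge (cf.\ Figure~\ref{fig2stack:a}). Equivalently: two bricks joined by a type-$1$ contact are \emph{flush} — their corresponding vertical faces are coplanar — along three of the four horizontal directions $\pm x,\pm y$, and this holds regardless of whether the contact is $\ls{1}$ or $\rs{1}$.

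Now let $B_1,\dots,B_4$ be the bricks, bottom to top. For each of the three contacts there is exactly one horizontal direction, call it $n_k$ for $k=1,2,3$, along which the two bricks meeting at that contact fail to be flush. Since $\{n_1,n_2,n_3\}$ has at most three elements among the four available directions, there is a direction $s\notin\{n_1,n_2,n_3\}$. Along $s$ every pair of consecutive bricks is flush, hence all four bricks have their face pointing in direction $s$ in one common vertical plane $\Pi$, and the $s$-faces of two consecutive bricks meet along the $s$-side of their contact rectangle. Therefore the union $F$ of these four $s$-faces is a connected planar piece of the boundary of the orthostack, and so is contained in a single (vertical) face $\widehat F$.

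It then remains to verify that $F$ guards the orthostack. Let $p$ be any point of a brick $B_i$. The segment from $p$ orthogonal to $\Pi$ stays inside the box $B_i$ (one of whose faces lies in $\Pi$), hence inside the orthostack, and its foot lies on the $s$-face of $B_i$, which is part of $F$. Thus $p$ is visible from $F$; since the orthostack is the union of its four bricks, $\mathcal V(F)$, and hence $\mathcal V(\widehat F)$, is the whole orthostack. So $\widehat F$ is the single vertical closed face guard we wanted.

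I expect the only real content to be the first step, namely translating ``type-$1$ contact'' into ``flush along three of the four horizontal directions''; once that is in place, the pigeonhole argument (four directions versus three contacts) together with the convexity of each individual brick closes the proof, with no case analysis on the $\ls{}$/$\rs{}$ pattern, which is what makes the statement clean. The one minor point to be careful about is that $F$ must be recognized as a subset of a genuine face of the orthostack, so that a single closed face guard contains it; this is immediate since $F$ is a connected planar subset of the boundary.
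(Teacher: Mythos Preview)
Your argument is correct and is essentially the paper's own proof, just unpacked more explicitly: the paper phrases the same pigeonhole as ``each new brick extends exactly three of the four candidate vertical faces, so after three additions at least one survives'', which is precisely your observation that a type-$1$ contact is flush in three of the four directions $\pm x,\pm y$, leaving some direction $s$ untouched by all three contacts. Your additional care in checking that the union $F$ of the four $s$-rectangles is a connected planar boundary piece (hence contained in a single vertical face) and that orthogonal projection onto $\Pi$ certifies visibility from $F$ fills in details the paper leaves to the reader.
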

\begin{proof}
Figure~\ref{figeeee} shows the construction of such an orthostack, brick by brick. When only the bottom brick is present, the guard can be chosen among its four vertical faces. Each time a new brick is added on top of the old ones, exactly three vertical faces are extended to form three sides of the new brick. As a consequence, after three bricks have been added on top of the first one, there is still at least one vertical face that stretches from the very bottom to the very top of the construction. This face guards the entire orthostack.
\end{proof}

\begin{figure}[h]
\centering
\subfigure[]{\label{figeeee:a}\includegraphics[scale=.35]{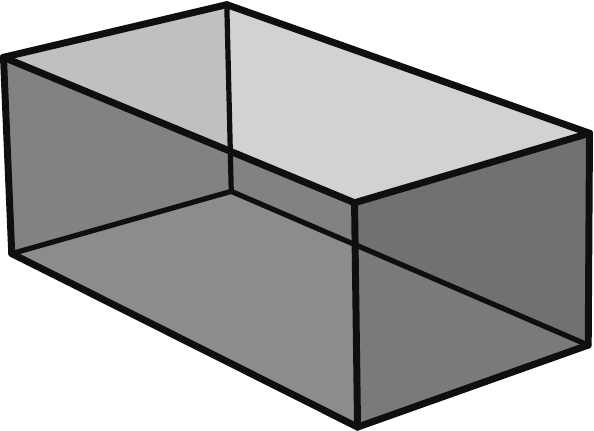}}\qquad
\subfigure[]{\label{figeeee:b}\includegraphics[scale=.35]{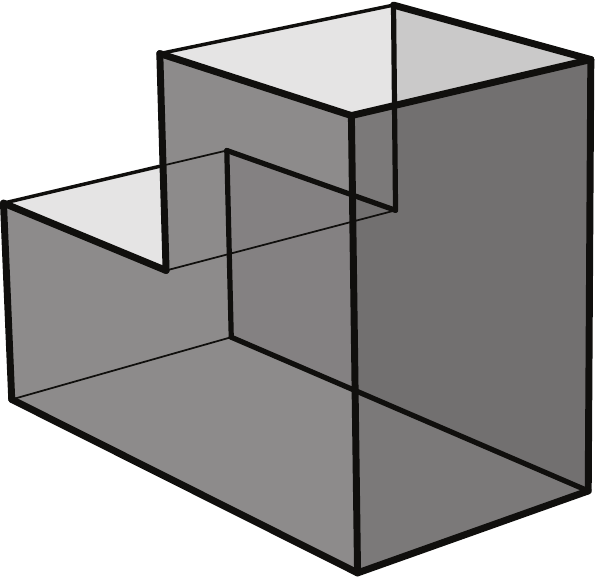}}\qquad
\subfigure[]{\label{figeeee:c}\includegraphics[scale=.35]{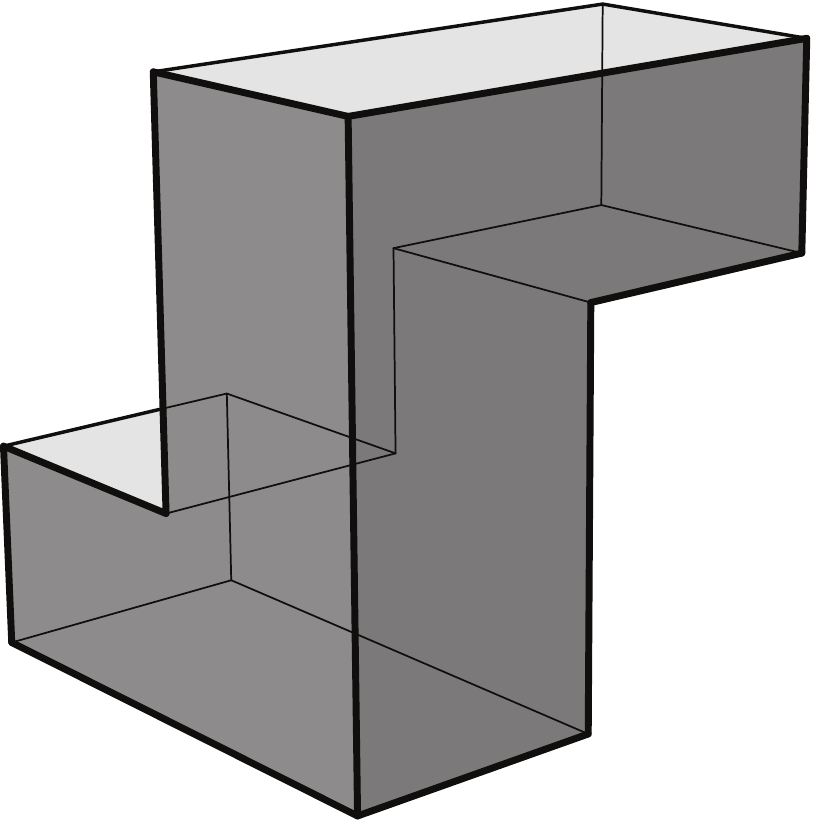}}\quad
\subfigure[]{\label{figeeee:d}\includegraphics[scale=.35]{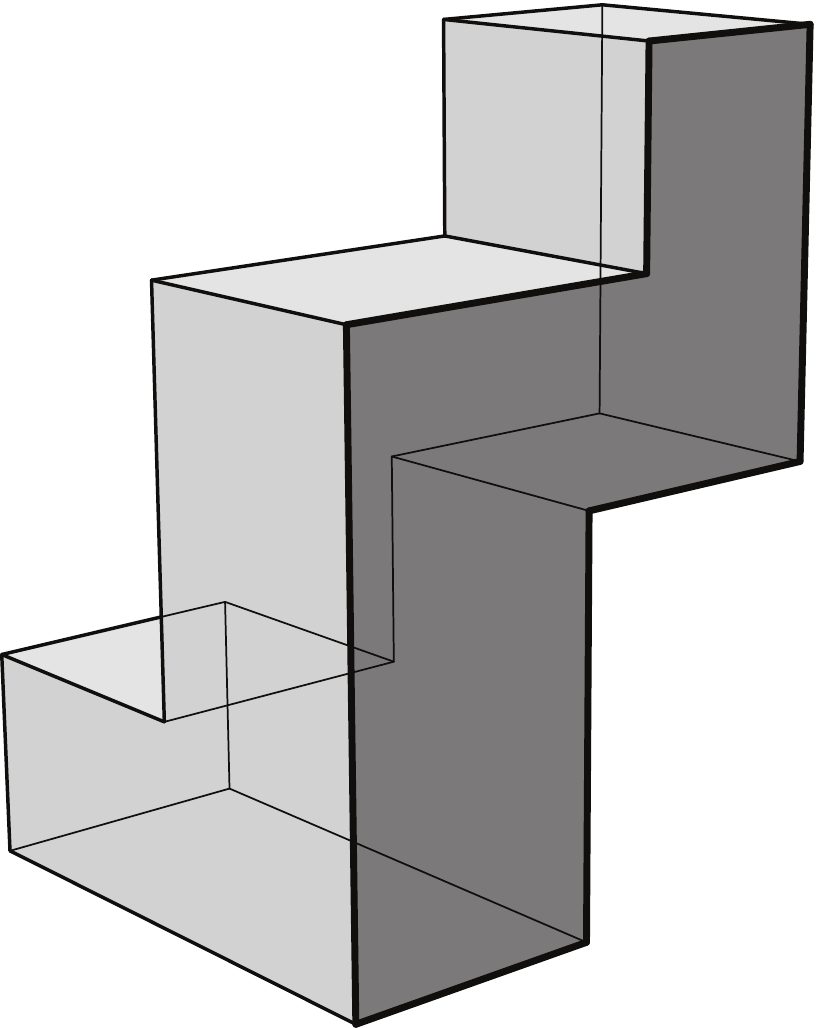}}
\caption{Guarding a $\rs{1}\ls{1}\rs{1}$ with a single guard}
\label{figeeee}
\end{figure}

\begin{theorem}
Any 2-reflex orthostack with $f$ faces is guardable by
$$\left\lfloor \frac{f+1} 7\right\rfloor$$
closed face guards.
\end{theorem}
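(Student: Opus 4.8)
The plan is to proceed by induction on the number $k$ of bricks of the orthostack, each time peeling off a block of two, three, or four consecutive bricks from the top, guarding that block with a single closed face guard, and recursing on what remains. By the preceding reduction to canonical contact rectangles we may assume that $f=k+5+\sum_{m=1}^{k-1}t_m$, where $t_m\in\{1,2,3,4\}$ is the type of the $m$-th contact rectangle (counted, say, from the bottom). The arithmetic backbone is the following remark: removing the top $b$ bricks deletes exactly $b$ contact rectangles (the $b-1$ internal to the block, plus the one joining the block to the rest), so the face count decreases by $b+\Sigma$, where $\Sigma\geqslant b$ is the sum of the types of these $b$ deleted contacts. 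Hence, if the removed block is guardable by one closed face guard and $b+\Sigma\geqslant 7$, then, writing $f'=f-(b+\Sigma)\leqslant f-7$ for the remaining orthostack and applying the inductive hypothesis to it, the total number of guards is at most $\left\lfloor(f'+1)/7\right\rfloor+1=\left\lfloor(f'+8)/7\right\rfloor\leqslant\left\lfloor(f+1)/7\right\rfloor$, as desired. So the whole argument reduces to exhibiting, for every orthostack with enough bricks, a top block that is both one-guardable and ``heavy enough''.

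I would carry out the case analysis on the types $c_1,c_2,c_3$ of the three topmost contacts, with $c_1$ the topmost. First, if $c_1=c_2=c_3=1$ and $k\geqslant 5$, the four topmost bricks form a sub-orthostack with signature $\lrs{1}\lrs{1}\lrs{1}$, which Lemma~\ref{l:stack2} guards with one vertical face; peeling these four bricks drops the face count by $4+(1+1+1)+c_4\geqslant 8$. (If instead $c_1=c_2=c_3=1$ and $k=4$, then the orthostack itself has signature $\lrs{1}\lrs{1}\lrs{1}$, so $f=12$ and one guard suffices since $\lfloor 13/7\rfloor=1$.) Otherwise, if the three topmost bricks form a sub-orthostack whose signature is neither $\rs{i}\rs{4}$ nor $\ls{4}\ls{i}$, then Lemma~\ref{l:stack1} guards it with one closed face guard and, since $c_1,c_2,c_3$ are not all $1$, we have $c_1+c_2+c_3\geqslant 4$, so peeling the top three bricks drops the face count by $3+(c_1+c_2+c_3)\geqslant 7$. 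In the remaining case the top-three signature is $\rs{i}\rs{4}$ or $\ls{4}\ls{i}$; then one of $c_1,c_2$ is a type-$4$ contact, so $c_1+c_2\geqslant 5$, and I would peel only the top two bricks: any two-brick orthostack is guarded by a single closed face guard --- for instance by any vertical face of the smaller of its two bricks, using the points of that face that lie at the height of the shared contact rectangle --- and peeling them drops the face count by $2+(c_1+c_2)\geqslant 7$.

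Since in every case the peeled block leaves at least one brick behind, the induction closes once the base cases $k\leqslant 3$ are settled: a single brick has $f=6$ and is guarded by any of its faces; a two-brick orthostack has $f\leqslant 11$ and is guarded by one face as above; and a three-brick orthostack is guarded by one face when its signature is neither $\rs{i}\rs{4}$ nor $\ls{4}\ls{i}$ (Lemma~\ref{l:stack1}; here $f\geqslant 10$, so $\lfloor(f+1)/7\rfloor\geqslant 1$), and by two faces otherwise (its two bottom bricks by one guard, its top brick by another), which is within the bound because it then contains a type-$4$ contact, whence $f\geqslant 13$. One routine bookkeeping point must be verified throughout: the face chosen to guard a peeled block --- a vertical face spanning the block, a vertical face of one of its extremal bricks, or a horizontal face strictly interior to the block (never the block's own top or bottom face, as Lemma~\ref{l:stack1} explicitly guarantees) --- is contained in an actual face of the full orthostack $\mathcal P$, and the visibility segments produced by Lemmas~\ref{l:stack1} and~\ref{l:stack2} stay inside the block, so the same face still sees the whole region of the block when regarded inside $\mathcal P$.

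I expect the delicate part to be the handling of the exceptional signatures $\rs{i}\rs{4}$ and $\ls{4}\ls{i}$ that Lemma~\ref{l:stack1} is forced to exclude: one must retreat to a two-brick block and check both that it remains one-guardable and that it still destroys at least seven faces, and it is exactly the presence of a type-$4$ contact --- which contributes the maximal amount, $4$, to $\sum_m t_m$ --- that makes this work. The other thing to watch is the constant: the ``$+1$'' in $\lfloor(f+1)/7\rfloor$ is precisely the slack that the floor function and the base cases consume, and peeling consistently from one end, rather than partitioning the whole orthostack at once, is what keeps the bookkeeping clean.
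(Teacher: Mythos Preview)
Your approach mirrors the paper's: induction on the number of bricks, peel a top block of two, three, or four bricks guarded by one face via Lemmas~\ref{l:stack1} and~\ref{l:stack2}, and recurse on the rest. The arithmetic and the case split are essentially correct, organized a bit differently from the paper (which peels two bricks whenever the two topmost contact types sum to at least~$5$, not only in the exceptional-signature case).

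There is, however, a genuine gap. You carefully check that the single guard for the peeled block $\mathcal P'$ is never its top or bottom horizontal face, so it sits inside an actual face of $\mathcal P$. But you do not control the guards that the inductive hypothesis returns for the remaining orthostack $\mathcal P''$: your IH is just the bare theorem statement, so nothing prevents one of those guards from being the \emph{top} horizontal face of $\mathcal P''$. That face is not, in general, a face of $\mathcal P$: if the top brick of $\mathcal P''$ is the smaller one at the junction, its entire top is the contact rectangle and disappears in $\mathcal P$; in the opposite direction only a proper subregion of it survives as a face of $\mathcal P$, and that subregion need not see everything the full rectangle saw. The paper closes this gap by proving the stronger statement that the $\lfloor (f+1)/7\rfloor$ guards can always be chosen with none lying on the topmost horizontal face; this strengthened IH, applied to $\mathcal P''$, guarantees that every recursive guard maps to a face of $\mathcal P$. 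You already have the ingredients for the strengthening---your base-case guards can all be taken vertical, and your peeled-block guard is never the topmost face of $\mathcal P$---but it must be stated and carried through the induction explicitly.
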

\begin{proof}
We will prove a slightly stronger statement: the $\lfloor (f+1)/7\rfloor$ face guards can be chosen in such a way that none of them lies on the topmost horizontal face of the orthostack. We prove this claim by induction on the number of bricks. Given a 2-reflex orthostack $\mathcal P$ with $k\geqslant 0$ bricks, suppose that the claim holds for all 2-reflex orthostacks with fewer bricks, and let us prove that it holds for $\mathcal P$, as well.

The cases $k=0$ and $k=1$ are trivial: if $k=0$, then $f=0$, and zero guards are sufficient; if $k=1$, then $f=6$, and $\mathcal P$ is guarded by any vertical face. For the case $k=2$, the possible configurations are represented in Figure~\ref{fig2stack}, which shows that $8\leqslant f\leqslant 11$, and $\mathcal P$ is easily guardable by a single vertical face. If $k=3$, Lemma~\ref{l:stack1} guarantees that one vertical face guard is sufficient, unless the signature of $\mathcal P$ is of the form $\rs{i}\rs{4}$ or $\ls{4}\ls{i}$. But in this case $f\geqslant 13$, so we are allowed to place two face guards, and it is easy to find two vertical faces that guard all the three bricks.

Now suppose that $k\geqslant 4$, and let $1\leqslant j\leqslant k$. Let $\mathcal P'$ be the orthostack formed by the $j$ topmost bricks of $\mathcal P$, having $f'$ faces, and let $\mathcal P''$ be the rest of the orthostack, made of $k-j$ bricks and having $f''$ faces. The total number of faces of $\mathcal P$ is $f=f'+f''-\Delta$, where $\Delta$ is the deficit of the contact rectangle between $\mathcal P'$ and $\mathcal P''$ (if $j=k$, we may take $\Delta=0$). By inductive hypothesis, we can guard $\mathcal P''$ with at most $\lfloor (f''+1)/7\rfloor$ closed face guards, none of which lies on the topmost horizontal face. Suppose that $\mathcal P'$ can be guarded with at most $\lfloor (f'-\Delta)/7\rfloor$ closed face guards, none of which lies on the topmost or the bottommost horizontal face. In total, we would have at most
$$\left\lfloor \frac{f''+1} 7\right\rfloor + \left\lfloor \frac{f'-\Delta} 7\right\rfloor \leqslant \left\lfloor \frac{f''+1+f'-\Delta} 7\right\rfloor = \left\lfloor \frac{f+1} 7\right\rfloor$$
closed face guards. Moreover, because none of these face guards would be coplanar with the contact rectangle between $\mathcal P'$ and $\mathcal P''$, they would be naturally mapped into face guards of $\mathcal P$. Indeed, the horizontal guards maintain the same shape and size after the merge, while some vertical guards may be merged with other faces, and thus enlarged, which is not an issue because this only makes them guard a bigger area. Together, these faces would guard all of $\mathcal P$, none of them would lie on its topmost horizontal face, and therefore our main claim on $\mathcal P$ would be proven.

Let us show that, in every case, it is always possible to choose $j$ in such a way that the desired conditions on $\mathcal P'$ are met, allowing our previous reasoning to go through. In most cases, choosing $j=2$ is enough, as detailed next. Let $a$ and $b$ be the types of the two topmost contact rectangles, and let $\Delta'$ be the deficit of the topmost contact rectangle. So, if $\mathcal P'$ consists of two bricks, $f'=12-\Delta'$. Now, if $\lfloor (f'-\Delta)/7\rfloor\geqslant1$, we are allowed to place at least one guard in $\mathcal P'$, and it is easy to see that one vertical closed face guard is always sufficient (cf.~Figure~\ref{fig2stack}). Hence we want $f'-\Delta\geqslant 7$ to hold, which is equivalent to $\Delta+\Delta'\leqslant 5$, that is, $a+b\geqslant 5$.

The only cases left are those in which $a+b\leqslant 4$. Namely, these are the cases in which the signature of the three topmost bricks is one of the following (or one of their reverses): $\lrs{1}\lrs{1}$, $\lrs{1}\lrs{2}$, $\lrs{1}\lrs{3}$, $\lrs{2}\lrs{2}$. In the last three cases, we choose $j=3$. Indeed, in these cases $f'$ is either 11 or 12, and $\lfloor (f'-\Delta)/7\rfloor=1$. By Lemma~\ref{l:stack1}, we can guard $\mathcal P'$ with one closed face guard that has the desired properties.

Finally, let us assume that the signature of the three topmost bricks of $\mathcal P$ is $\lrs{1}\lrs{1}$. Choosing $j=3$ works as above, unless the third contact rectangle of $\mathcal P$ is again of type~1 (indeed, if the type is at least 2, we have $f'=10$, $\Delta\leqslant 3$, and therefore $\lfloor (f'-\Delta)/7\rfloor\geqslant 1$). In this last case, we choose $j=4$ (recall that $k\geqslant 4$), so that the signature of $\mathcal P'$ is $\lrs{1}\lrs{1}\lrs{1}$. We have $f'=12$, $0\leqslant \Delta\leqslant 4$ ($\Delta=0$ holds if $k=4$), and $\lfloor (f'-\Delta)/7\rfloor=1$. By Lemma~\ref{l:stack2}, $\mathcal P'$ can be guarded by a single vertical guard, and our theorem follows.
\end{proof}

\section{Minimizing face guards}\label{s4}

\subsection{Hardness of approximation}

In~\cite{faceguards}, Souvaine et al.\ ask for the complexity of minimizing face guards in a given polyhedron. We show that this problem is at least as hard as \SET, and we infer that approximating the minimum number of (closed or open) face guards within a factor of $\Omega(\log f)$ is \NP-hard. This remains true even if we restrict the problem to the class of simply connected orthogonal polyhedra.

We also show that the same hardness of approximation result holds for non-triangulated terrains. Recall that, in~\cite{terrain1}, Iwamoto et al.\ proved that minimizing closed face guards in triangulated terrains is \NP-hard. Thus, we improve on their result in the case of non-triangulated terrains, while also extending it to open face guards.

\paragraph{Orthogonal polyhedra.}

We give a linear approximation-preserving reduction from \SET, in the sense of~\cite[Definition~8.4]{ausiello}.

\begin{theorem}\label{t3:hard}
\SET is L-reducible to the problem of minimizing (closed or open) face guards in a simply connected orthogonal polyhedron.
\end{theorem}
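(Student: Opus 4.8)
The plan is to encode an arbitrary instance of \SET --- a ground set $U = \{u_1, \dots, u_n\}$ together with a family $\mathcal S = \{S_1, \dots, S_m\}$ of subsets whose union is $U$ --- as a simply connected orthogonal polyhedron $\mathcal P$ in which the optimal face-guarding solutions correspond exactly to the optimal set covers, up to an additive/multiplicative slack that is controlled linearly. First I would build, for each set $S_j$, a ``set gadget'': a distinguished orthogonal face (or a small cluster of faces forced to act as a single choice) that, when selected as a guard, is able to see a collection of pockets, one pocket for each element $u_i \in S_j$. For each element $u_i$ I would build an ``element gadget'': a deep orthogonal pocket (a thin rectangular shaft, in the spirit of the chimneys/dents already used in Section~\ref{s2}) whose bottom can only be seen from the faces belonging to the set gadgets of those $S_j$ that contain $u_i$. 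The pockets are made thin and deep enough that no face outside the intended set gadgets --- and in particular no ``structural'' face of the ambient cuboid --- can peer into the bottom of a pocket, so covering every element pocket is equivalent to choosing a subfamily of $\mathcal S$ covering $U$. Routing everything inside one simply connected orthogonal solid is the kind of wiring that is standard but fiddly; I would lay the element pockets in a row along the top of a flat cuboid and carve horizontal corridors connecting each set face to exactly the pockets it should illuminate, keeping all cross sections connected.

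The second ingredient is a padding argument to make the reduction an L-reduction rather than merely a Karp reduction. I need two inequalities in the sense of \cite[Definition~8.4]{ausiello}: (i) $\mathrm{opt}_{\ART}(\mathcal P) \leqslant \alpha\cdot \mathrm{opt}_{\SET}(U,\mathcal S)$ for a constant $\alpha$, and (ii) from any face-guarding solution of $\mathcal P$ of size $g$ I can recover in polynomial time a set cover of size at most $\mathrm{opt}_{\SET}(U,\mathcal S) + \beta\cdot(g - \mathrm{opt}_{\ART}(\mathcal P))$ for a constant $\beta$. Inequality (i) follows because a set cover of size $t$ gives a guard set of size $t$ (the $t$ chosen set faces) plus a fixed number $O(1)$ of ``housekeeping'' guards needed to see the structural parts of $\mathcal P$ (the main cavity, corridor walls, etc.) that no set face reaches; since $\mathrm{opt}_{\SET} \geqslant 1$ always, that additive $O(1)$ is absorbed into the constant $\alpha$. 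For inequality (ii) I argue that any guarding set of $\mathcal P$ can be ``normalized'' without increasing its size: any guard that is not one of the designated set faces and does not see a pocket bottom can be swapped for a structural guard, of which only $O(1)$ are ever needed; the remaining guards must include, for each element pocket, at least one face that sees its bottom, and by construction such a face lies in a set gadget for some $S_j \ni u_i$. Reading off those $S_j$ yields a set cover, and the bookkeeping shows the size blow-up is linear. The closed and open cases are handled in parallel: the pockets are designed so that even an open face (boundary removed) in the correct set gadget still sees the pocket bottom, and a reflex-edge or boundary face of the wrong gadget still cannot --- the deep-thin-shaft construction from Section~\ref{s2} is robust to this, since visibility into the bottom of a shaft is an open condition.

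The step I expect to be the main obstacle is making the geometric separation rigorous: proving that the bottom of each element pocket $u_i$ is visible from every intended set face ($S_j \ni u_i$) and from \emph{no} unintended face, all while keeping $\mathcal P$ orthogonal, simply connected, and of size linear in $n + m + \sum_j |S_j|$. The difficulty is combinatorial packing --- a set face must simultaneously ``look down'' one column into possibly many different pockets scattered along the row --- so I would either give each set its own horizontal layer (stacking $m$ thin slabs, each slab carrying one set face and horizontal slits to the relevant pockets) or route $m$ disjoint L-shaped corridors in a single thick layer, carefully choosing coordinates so that sight lines along a corridor reach only its designated pocket bottoms and are blocked (by orthogonal jogs and by the thinness of competing shafts) from all others. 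Verifying non-interference of these sight lines, and that no structural face accidentally sees a pocket bottom, is the technical heart; once the visibility incidence structure is pinned down to be exactly ``set face $j$ sees pocket bottom $i$ iff $u_i \in S_j$, plus $O(1)$ structural faces seeing everything else,'' the L-reduction inequalities and hence the $\Omega(\log f)$ inapproximability (via the known hardness of approximating \SET) follow immediately.
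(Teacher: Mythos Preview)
Your high-level strategy---element gadgets whose critical points must be seen, set gadgets whose faces are the only ``efficient'' way to see them, plus $O(1)$ structural guards---matches the paper's, and your L-reduction bookkeeping is fine. But the proposal stops exactly where the real work begins: you explicitly flag the geometric separation as ``the main obstacle'' and offer two vague routes (stacked slabs, or $L$-shaped corridors) without carrying either out. The paper's contribution is precisely a concrete construction that makes this separation verifiable, and it differs from both of your suggestions.

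Two specific issues with your sketch. First, you assert that pocket bottoms will be visible from \emph{no} face outside the intended set gadgets. That is unachievable: the pocket's own walls and floor, and the faces surrounding its opening, always see the bottom. The paper does not try to prevent this; instead it arranges that every such ``parasitic'' face sees at most \emph{one} distinguished point, so in the normalization step it can be swapped for any set face covering that element without loss. Your normalization argument silently assumes the stronger (false) property. Second, your ``stack $m$ thin slabs'' idea threatens simple connectivity and makes it hard to argue that a set face in slab $j$ sees pockets in a different layer; your corridor idea needs each corridor to reach arbitrarily many scattered pockets while blocking all cross-talk, which is the very packing problem you are trying to solve.

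The paper's construction sidesteps these difficulties with a single-layer design: $n$ thin \emph{fissures} on one side (the element gadgets, with distinguished points at their far corners), facing a row of $m$ \emph{mountains} of strictly increasing height separated by valleys of increasing depth. The $i$-th set face is the vertical wall of the $i$-th mountain facing the fissures; because heights increase, each set face has an unobstructed line of sight over all shorter mountains to every fissure. Non-membership $j\notin S_i$ is encoded by carving a narrow \emph{dent} in the $i$-th set face in front of the $j$-th fissure, blocking that one sight line. A short width computation (fissure width just under $1/4$ when dents have unit width) bounds the cone from each distinguished point so that only the intended faces---set faces, valley faces adjacent to them, and the fissure's own local faces---can see it. A small niche forces one extra dedicated guard, giving the exact correspondence $g(\mathcal P)=\mathrm{opt}_{\SET}+1$, which makes the L-reduction immediate. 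This explicit geometry, and especially the increasing-height trick that lets every set face reach every fissure without layering or routing, is what your proposal is missing.
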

\begin{proof}
Let an instance of \SET be given, i.e., a \emph{universe} $U=\{1,\cdots,n\}$, and a collection $S\subseteq \mathcal P(U)$ of $m\geqslant 1$ subsets of $U$. We will construct a simply connected orthogonal polyhedron with $f\in O(mn)$ faces that can be guarded by $k$ (closed or open) faces if and only if $U$ is the union of $k-1$ elements of $S$.

Figure~\ref{figset2} shows our construction for $U=\{1,2,3,4\}$ and $S=\{\{2,4\},\{1,3\},\{2\}\}$. Figure~\ref{figset1} illustrates the side view of a generic case in which $m=4$.

\begin{figure}[h]
\centering
\includegraphics[width=.75\linewidth]{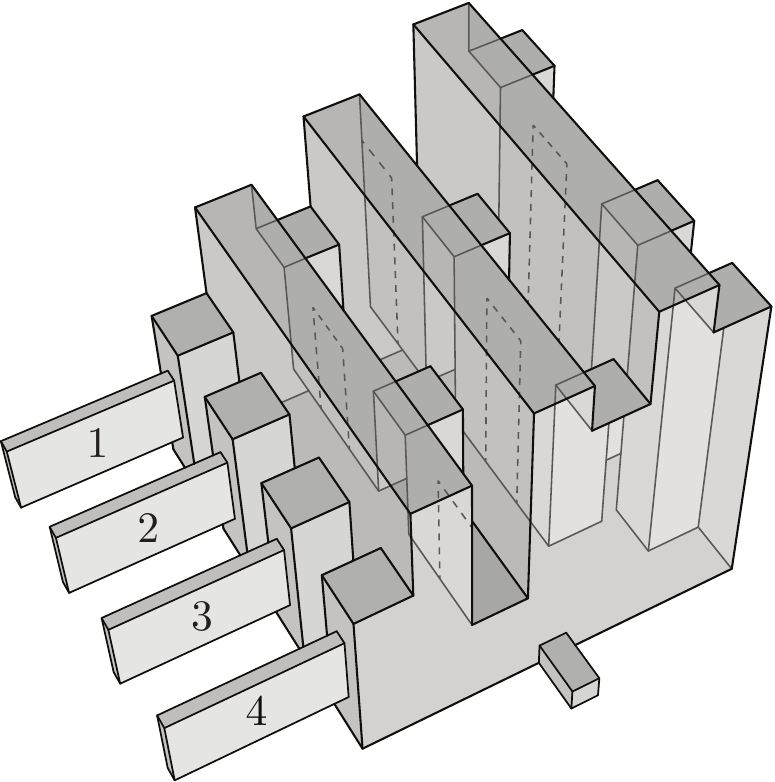}
\caption{\SET reduction for orthogonal polyhedra, 3D view}
\label{figset2}
\end{figure}

\begin{figure}[h]
\centering
\includegraphics[width=.75\linewidth]{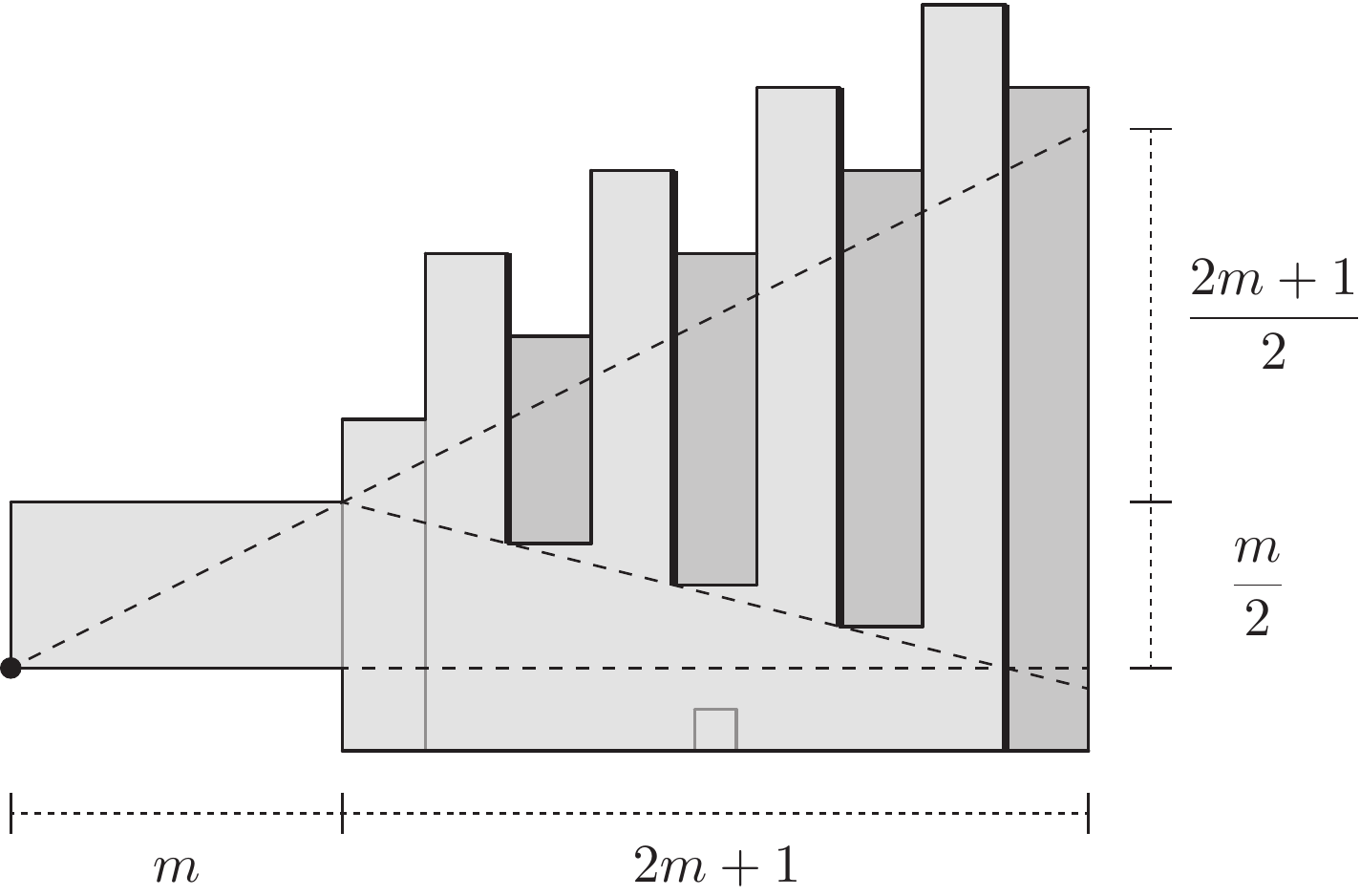}
\caption{\SET reduction for orthogonal polyhedra, side view}
\label{figset1}
\end{figure}

Each of the thin cuboids on the far left is called a \emph{fissure}, and represents an element of $U$. Facing the fissures there is a row of $m$ \emph{mountains} of increasing height, separated by \emph{valleys} of increasing depth. The $m$ vertical walls that are facing the fissures (drawn as thick lines in Figure~\ref{figset1}) are called \emph{set faces}, and each of them represents an element of $S$.

For each $S_i\in S$, we dig a narrow rectangular \emph{dent} in the $i$-th set face in front of the $j$-th fissure, if and only if $j\notin S_i$. Each dent reaches the bottom of its set face, and almost reaches the top, so that it does not separate the set face into two distinct faces. Moreover, every dent (except those in the rightmost set face) is so deep that it connects two neighboring valleys. In Figure~\ref{figset1}, dents are depicted as darker regions; in Figure~\ref{figset2}, the dashed lines mark the areas where dents are \emph{not} placed.

We want to fix the width of the fissures in such a way that only a restricted number of faces can see their bottom. Specifically, consider $n$ \emph{distinguished points}, located in the middle of the lower-left edges of the fissures (indicated by the thick dot in Figure~\ref{figset1}). The $j$-th distinguished point definitely sees some portions of the $i$-th set face, provided that $j\in S_i$. If this is the case, and $i<m$, it also sees portions of two other faces (one horizontal, one vertical) surrounding the same valley. Moreover, if $j\notin S_m$, the $j$-th distinguished point also sees the bottom of a dent in the rightmost set face. We want no face to be able to see any distinguished point, except the faces listed above (plus of course the faces belonging to fissures or surrounding their openings). To this end, assuming that the dents have unit width, we set the width of the fissures to be slightly less than $1/4$. Indeed, referring to Figure~\ref{figset1}, the width of the visible region of a distinguished point, as it reaches the far right of the construction, is strictly less than $$\frac{(m)+(2m+1)}{m}\cdot\frac 14\ =\ \left(3+\frac 1m\right)\cdot\frac 14\ \leqslant\ 4\cdot\frac 14\ =\ 1,$$ because $m\geqslant 1$.

Finally, a small \emph{niche} is added in the lower part of the construction. Its purpose is to enforce the selection of a ``dedicated'' face guard, as no face can see both a distinguished point and the bottom of the niche.

Let a guarding set for our polyhedron be given, consisting of $k$ face guards. We will show how to compute in polynomial time a solution of size at most $k-1$ for the given \SET instance, provided that it is solvable at all.

We first discard every face guard that is not guarding any distinguished point. Because at least one such face must guard the niche, we are left with at most $k-1$ guards. Then, if any of the remaining face guards borders the $i$-th valley, with $i<m$, we replace it with the $i$-th set face. Indeed, it is easy to observe that such a set face can see the same distinguished points, plus possibly some more. By construction, all the remaining guards can see exactly one distinguished point (they are either faces belonging to some fissure, or surrounding its opening, or rightmost faces of the rightmost dents). We replace each of these face guards with any set face that guards the same distinguished point (which exists, otherwise the \SET instance would be unsolvable). As a result, we have at most $k-1$ set faces guarding all the distinguished points. These immediately determine a solution of equal size to the given \SET instance.

Conversely, if the \SET instance has a solution of size $k$, it is easy to see that our polyhedron has a guarding set of $k+1$ guards: all the set faces corresponding to the \SET's solution, plus the bottom face.
\end{proof}

\begin{cor}\label{cor:hard}
Given a simply connected orthogonal polyhedron with $f$ faces, it is \NP-hard to approximate the minimum number of (closed or open) face guards within a factor of $\Omega(\log f)$.
\end{cor}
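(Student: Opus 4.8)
The plan is to derive the corollary from Theorem~\ref{t3:hard} together with the classical fact that approximating \SET within a factor $\Omega(\log n)$ (on a universe of size $n$) is \NP-hard. I will use only that $\Omega(\log n)$ is a hardness threshold for \SET, not the optimal constant.

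First I would record the two quantitative features of the construction in Theorem~\ref{t3:hard}: from a \SET instance $(U,S)$ with $|U|=n$ and $|S|=m\ge 1$ it builds, in polynomial time, a simply connected orthogonal polyhedron $\mathcal P$ with $f\in O(mn)$ faces whose minimum number of (closed or open) face guards is exactly $\mathrm{opt}(U,S)+1$; and from any guarding set of size $k$ for $\mathcal P$ one extracts, in polynomial time, a set cover of size at most $k-1$. Since the inapproximability of \SET already holds for instances in which $m$ is polynomially bounded in $n$, on these instances $f=\mathrm{poly}(n)$ and hence $\log f=\Theta(\log n)$.

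Now suppose, for contradiction, that some polynomial-time algorithm always outputs a guarding set whose size is at most $\alpha(f)$ times the minimum possible, with $\alpha(f)=o(\log f)$. Running it on the polyhedron built from a nontrivial instance $(U,S)$ (the case $U=\emptyset$ being trivial, so $\mathrm{opt}(U,S)\ge 1$) yields a guarding set of size $k\le\alpha(f)\,(\mathrm{opt}(U,S)+1)$, and then the extraction procedure of Theorem~\ref{t3:hard} gives a set cover of size at most $k-1\le\alpha(f)\,(\mathrm{opt}(U,S)+1)-1\le 2\,\alpha(f)\,\mathrm{opt}(U,S)$. As $\alpha(f)=o(\log f)=o(\log n)$, this is a polynomial-time $o(\log n)$-approximation for \SET, contradicting the hardness above. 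Equivalently, one may simply invoke that Theorem~\ref{t3:hard} is an L-reduction in the sense of \cite[Definition~8.4]{ausiello} and that L-reductions transport a logarithmic inapproximability threshold.

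The only delicate point — the ``obstacle'', such as it is — is the bookkeeping: checking that the additive offset of one guard and the $O(mn)$ blow-up in $f$ cost nothing asymptotically. The offset is swallowed by the factor $2$ above (or, more cleanly, by the L-reduction formalism), and the blow-up is harmless precisely because $\log f=\Theta(\log n)$ on the relevant instances, so nothing in the argument is more than routine verification once Theorem~\ref{t3:hard} is in hand.
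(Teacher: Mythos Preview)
Your proposal is correct and follows essentially the same approach as the paper: invoke the L-reduction of Theorem~\ref{t3:hard} together with the $\Omega(\log n)$ inapproximability of \SET, and use that the hard \SET instances have $m$ polynomially bounded in $n$ so that $\log f=\Theta(\log n)$. The paper's proof is terser (it just chains the inclusions $\Omega(\log n)\subseteq\Omega(\log(mn))\subseteq\Omega(\log f)$), whereas you spell out the contradiction argument and the absorption of the additive $+1$ explicitly, but the substance is the same.
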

\begin{proof}
The polyhedra constructed by the L-reduction of Theorem~\ref{t3:hard} have $f\in O(mn)$ faces. It was proved in~\cite{set} that \SET is \NP-hard to approximate within a ratio of $\Omega(\log n)$ and, by inspecting the reduction employed, it is apparent that all the hard \SET instances generated are such that $m\in O(n^c)$, for some constant $c\geqslant 1$. As a consequence, we may assume that $\Omega(\log n)=\Omega(\log n^{c+1})\subseteq \Omega(\log(mn))\subseteq \Omega(\log f)$. Since the minimum is \NP-hard to approximate within some factor belonging to $\Omega(\log n)$, and the same factor also belongs to $\Omega(\log f)$, our claim follows.
\end{proof}

\begin{observation}
An analogous of Theorem~\ref{t3:hard} and of Corollary~\ref{cor:hard} also holds, with the same proof, for the related problem of guarding the \emph{boundary} of a polyhedron by face guards, as opposed to the whole interior.
\end{observation}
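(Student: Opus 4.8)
The plan is to observe that the reduction of Theorem~\ref{t3:hard} never actually relies on the visibility of any point of the \emph{interior} of the constructed polyhedron: every point whose coverage is forced in the argument already lies on the boundary. So I would simply revisit the two directions of the L-reduction and check that each survives when ``guard'' is weakened to ``guard the boundary''.

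For the direction that turns a guarding set into a \SET solution, recall that the argument only exploits two kinds of points: the $n$ distinguished points (located on the lower-left edges of the fissures) and the bottom of the niche. Both of these lie on the boundary of the polyhedron, so any set of face guards whose visible region covers the boundary must in particular see all $n$ distinguished points and the bottom of the niche. Hence the same cleanup procedure applies verbatim: discard the (at least one) guard dedicated to the niche bottom, replace every valley-bordering guard by the corresponding set face, and replace every remaining guard by a set face seeing the same distinguished point. This yields at most $k-1$ set faces covering all distinguished points, i.e.\ a \SET solution of size at most $k-1$.

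For the converse direction, note that the guarding set exhibited in the proof of Theorem~\ref{t3:hard} --- the set faces corresponding to a \SET solution together with the bottom face --- was shown there to guard the \emph{entire} polyhedron; a fortiori it guards the boundary. Thus a \SET solution of size $k$ still yields a boundary-guarding set of size $k+1$. The linear relations between the two optima, and the bound $f\in O(mn)$ on the size of the instance, are exactly as before, so the map is still an L-reduction; the proof of Corollary~\ref{cor:hard} then applies word for word.

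There is essentially no obstacle here, which is why the statement is phrased as ``with the same proof'': the only thing one must verify is that the finitely many points on which the lower-bound argument rests are boundary points, which is immediate from the construction. The observation is best read as a sanity check that nothing in the reduction secretly depended on interior visibility.
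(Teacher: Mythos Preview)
Your proposal is correct and matches the paper's approach exactly: the paper gives no separate proof for this Observation, simply asserting that the same proof works, and your verification that the distinguished points and the niche bottom are boundary points (so the lower-bound direction survives) while the exhibited guarding set already covers the whole polyhedron (so the upper-bound direction trivially survives) is precisely the sanity check the paper is inviting the reader to perform.
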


\paragraph{Non-triangulated terrains.}

We show that the above reduction can be adapted to work with non-triangulated terrains. A \emph{terrain} is a piecewise-linear surface embedded in $\R^3$, such that any vertical line intersects the surface in exactly one point. Therefore, a terrain is an unbounded 2-manifold that partitions $\R^3$ in an \emph{upper region} and a \emph{lower region}, each of which homeomorphic to a half-space. Faces, vertices and edges of terrains are defined in the same way as for polyhedra (cf.~Section~\ref{s2}). We stipulate that a terrain has exactly one unbounded face, and therefore no unbounded edges.

Visibility is defined in the upper region only: two points belonging to the upper region are visible if the line segment connecting them does not intersect the lower region. Therefore, the Art Gallery Problem for face guards in terrains asks for a set of faces that collectively see the whole upper region of a given terrain. The problem of guarding terrains is connected to the problem of guarding polyhedra, in that terrains may be intuitively viewed as a class of ``upward-unbounded polyhedra''.

A terrain whose bounded faces are triangles is called a \emph{triangulated} terrain. These special terrains are studied in~\cite{terrain2,terrain1,terrain3}, where it is shown that computing the minimum number of closed face guards in a given triangulated terrain is \NP-hard. Here we strengthen this result by showing that such a minimum is even \NP-hard to approximate within a logarithmic factor, for both open and closed face guards, provided that terrains are not necessarily triangulated.

\begin{theorem}\label{t4:hard}
Given a (not necessarily triangulated) terrain with $f$ faces, it is \NP-hard to approximate the minimum number of (closed or open) face guards within a factor of $\Omega(\log f)$.
\end{theorem}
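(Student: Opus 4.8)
The plan is to adapt the orthogonal-polyhedron construction of Theorem~\ref{t3:hard} to the terrain setting, exploiting the analogy that a terrain is an ``upward-unbounded polyhedron''. First I would take the same instance of \SET, a universe $U=\{1,\dots,n\}$ and a family $S$ of $m$ subsets, and lay out the terrain so that its lower region is essentially the region \emph{below} the top surface of the polyhedron built before: the fissures become vertical slots cut downward into the terrain, the mountains and valleys become a washboard of ridges and troughs of increasing height/depth running parallel to the fissures, the set faces are the $m$ (nearly) vertical walls facing the slots, and the dents are the narrow vertical grooves carved into those walls exactly in front of fissure $j$ whenever $j\notin S_i$. The same quantitative width choice (fissures of width slightly less than $1/4$, dents of unit width) guarantees, by the identical line-of-sight calculation $\bigl(3+\tfrac1m\bigr)\cdot\tfrac14\leqslant 1$, that each distinguished point at the bottom of its slot is visible only from the faces listed in the original proof. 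The niche is reproduced as a small pit whose bottom no face seeing a distinguished point can also see, forcing a dedicated extra guard.

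Second, I would verify that this surface genuinely is a terrain: every vertical line meets it in exactly one point. The washboard of ridges and troughs is a function graph in the $(x,y)$ plane, so that part is immediate; the subtle points are the slots and the dents, which are cut straight down and could, if carelessly drawn, create overhangs. The fix is to make the slot and dent walls \emph{slightly} slanted rather than perfectly vertical — since the visibility argument only needs the widening of the visible cone to stay below unit width, a small slant is harmless, and it keeps the surface a genuine graph of a piecewise-linear function with exactly one unbounded face. (Alternatively one keeps the walls vertical and argues in the limit; the slanted version is cleaner.) I would also note that the whole construction has $f\in O(mn)$ faces, exactly as before.

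Third, with the terrain in hand, the correctness argument is a near-verbatim copy of the one in Theorem~\ref{t3:hard}: a guarding set of $k$ faces, after discarding those that see no distinguished point (at least one is spent on the niche) and after replacing every face bordering a trough with the corresponding set face and every ``fissure-local'' face with any set face seeing the same distinguished point, yields at most $k-1$ set faces covering all distinguished points, hence a \SET solution of size $\leqslant k-1$; conversely a \SET solution of size $k$ gives a guarding set of size $k+1$ (the $k$ chosen set faces plus the unbounded top face, which sees everything not hidden in a slot, a dent, or the niche). This works identically for open and for closed face guards, because the distinguished points are interior to the slot edges and the relevant visibilities do not depend on the boundary of any guard. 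Combining this L-reduction with the $\Omega(\log n)$ inapproximability of \SET from~\cite{set}, exactly as in Corollary~\ref{cor:hard} and using $m\in O(n^c)$ so that $\Omega(\log n)\subseteq\Omega(\log f)$, gives the stated $\Omega(\log f)$ hardness for terrains.

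The main obstacle I anticipate is the terrain constraint itself: one must be careful that cutting the fissures, the dents, and the niche downward into the surface never produces an overhang that would violate the ``one intersection per vertical line'' condition, while simultaneously not disturbing the delicate width inequality that pins down exactly which faces see each distinguished point. I expect the right move is the small-slant trick sketched above, which decouples these two requirements; after that, everything else is a routine transcription of the polyhedral proof.
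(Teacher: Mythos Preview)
Your adaptation is on the right track and matches the paper's strategy in spirit (slant the vertical walls so the surface is a graph, keep the fissure/mountain/dent layout, preserve the width inequality), but there is a concrete gap in the ``converse'' direction. You assert that a \SET solution of size $k$ yields a guarding set of size $k+1$, namely the $k$ set faces together with the unbounded top face. Yet in the same sentence you acknowledge that the top face does \emph{not} see into the dents. So who guards the dents carved into the $m-k$ set faces that were \emph{not} selected? In the polyhedral reduction the bottom face handled them (the dents there tunnel between valleys and are visible from below), but a terrain has no bottom face, and the unbounded face looks \emph{down} onto the tops of the grooves without reaching their floors. With your $k+1$ guards the terrain is simply not covered, and the reduction does not yield the intended correspondence between optima.

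The paper's fix is exactly at this point, and it is the one idea your proposal is missing: make the mountains \emph{thin} in the direction transverse to the dents, so that the two large side walls of the terrain can see straight through every dent. Then all the dents (and the other leftover regions) are guarded by a bounded number $c$ of extra faces, independent of $m$ and $n$. The reduction no longer gives $\mathrm{OPT}_{\text{terrain}}=\mathrm{OPT}_{\textsc{Set Cover}}+1$ but only $\mathrm{OPT}_{\textsc{Set Cover}}\leqslant \mathrm{OPT}_{\text{terrain}}\leqslant \mathrm{OPT}_{\textsc{Set Cover}}+c$; since the hard \SET instances from~\cite{set} have arbitrarily large optima, this additive constant is absorbed in the approximation ratio and the $\Omega(\log f)$ inapproximability still follows. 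Incidentally, once you accept a constant additive slack, the niche becomes unnecessary and the paper drops it.
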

\begin{proof}
We show that \SET is L-reducible to the problem of minimizing (closed or open) face guards in a non-triangulated terrain, by suitably modifying the construction given in Theorem~\ref{t3:hard}. Then, our claim follows as in Corollary~\ref{cor:hard}.

Our new construction is sketched in Figures~\ref{figter1} and~\ref{figter2}, again for $U=\{1,2,3,4\}$ and $S=\{\{2,4\},\{1,3\},\{2\}\}$. The faces that look vertical in Figure~\ref{figter1} are actually very steep slopes, as the side view of Figure~\ref{figter2} suggests.

\begin{figure}[h]
\centering
\includegraphics[width=.75\linewidth]{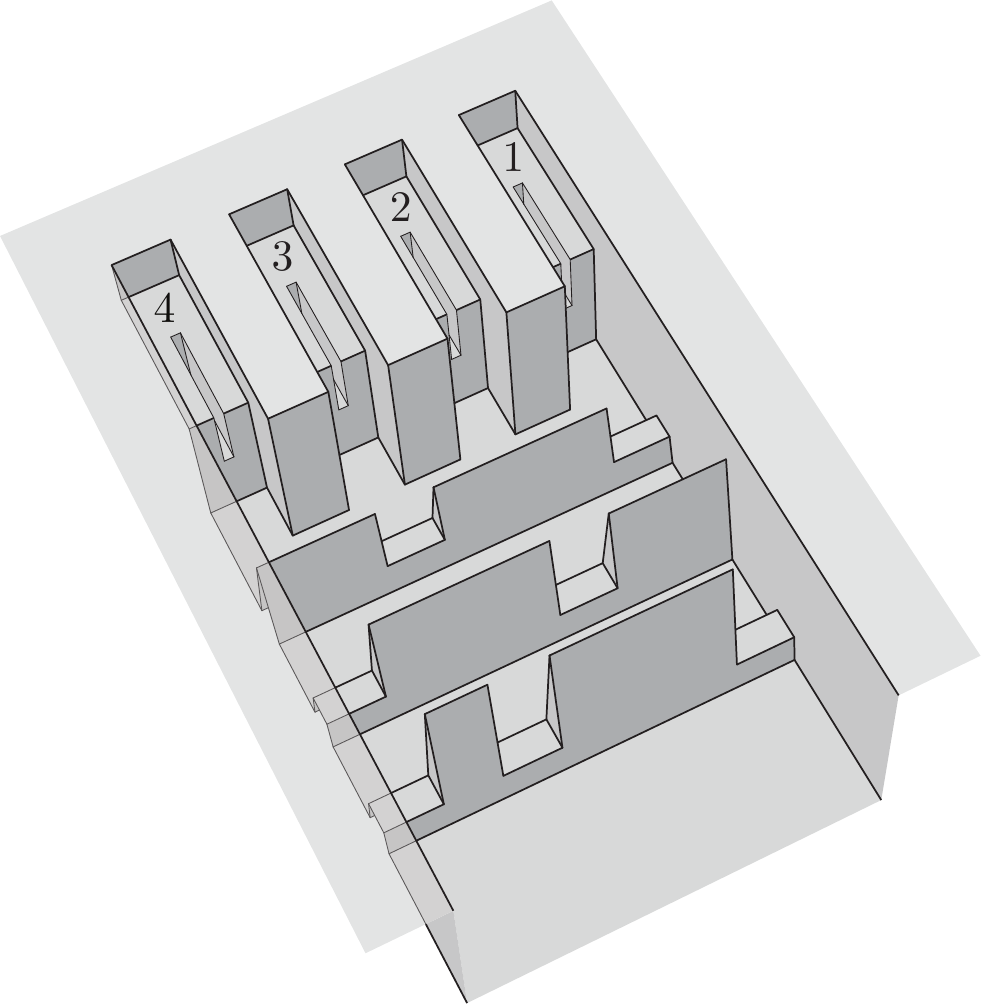}
\caption{\SET reduction for non-triangulated terrains, 3D view}
\label{figter1}
\end{figure}

We have $n$ very thin \emph{fissures} and $m$ \emph{mountains} of increasing height. The proportions are chosen in such a way that no face in the terrain can see inside two distinct fissures all the way to the far corner (i.e., the \emph{distinguished point} in Figure~\ref{figter2}), except the \emph{set faces} on the mountains. In particular, the construction is so ``long'' that the wall opposite to the fissures can see no fissure, as its visual is obstructed by the mountains (refer to the dashed line in Figure~\ref{figter2}). Moreover, each mountain, due to its \emph{dents}, can see exactly the fissures that correspond to the subset of $U$ that the mountain itself represents. Observe that, once again, $f\in O(mn)$.

\begin{figure}[h]
\centering
\includegraphics[width=.85\linewidth]{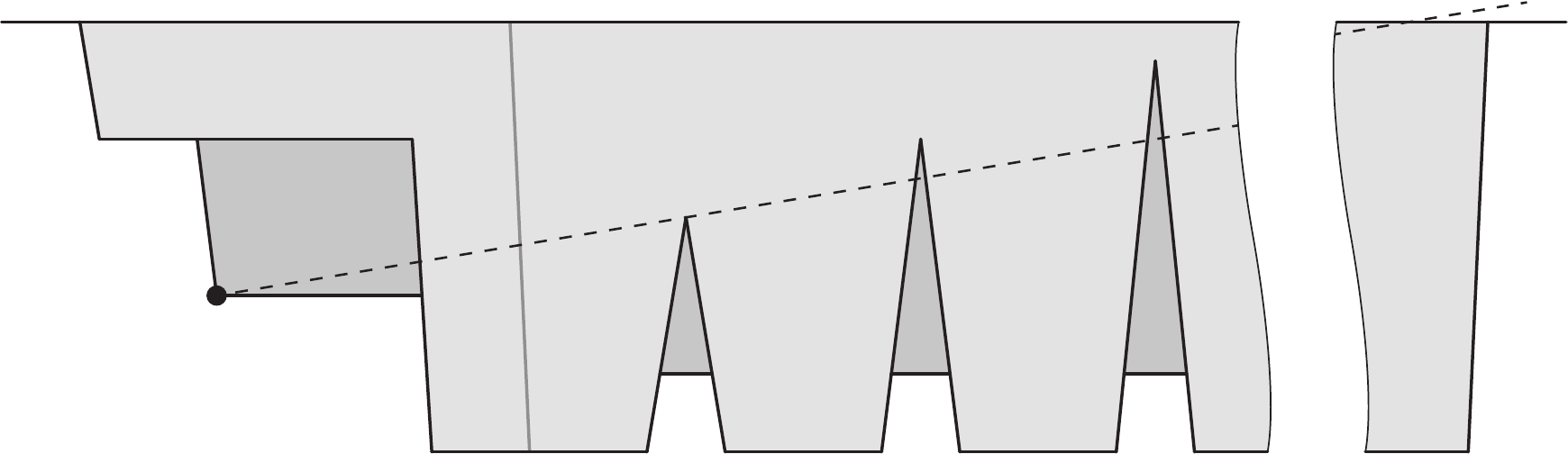}
\caption{\SET reduction for non-triangulated terrains, side view}
\label{figter2}
\end{figure}

Now, regardless of whether face guards are closed or open, all the remaining parts of the terrain can be guarded by a number of face guards that is bounded by a small constant $c$. Indeed, if the mountains are thin enough, all the dents can be collectively guarded by the two large side walls of the terrain (i.e., the light-shaded polygons in Figure~\ref{figter2}). Because we may assume that the ``hard'' \SET instances have arbitrarily large optimal solutions, $c$ becomes negligible in the computation of the approximation ratio, and our theorem follows.
\end{proof}

\subsection{Computing visible regions}

The next natural question is whether the minimum number of face guards can be computed in \NP, and possibly approximated within a factor of $\Theta(\log f)$ in polynomial time. Usually, when finitely many possible guard locations are allowed (such as with vertex guards and edge guards), this is established by showing that the visible region of any guard can be computed efficiently, as well as the intersection of two visible regions, etc. As a result, the environment is partitioned into polynomially many regions such that, for every region $R$ and every guard $g$, either $R\subseteq \mathcal V(g)$ or $R\cap \mathcal V(g)=\varnothing$. This immediately leads to a reduction to \SET, which yields an approximation algorithm with logarithmic ratio, via a well-known greedy heuristic~\cite{ghosh}.

With face guards (and also with edge guards in polyhedra) the situation is complicated by the fact that the visible region of a guard may not be a polyhedron, but in general its boundary is a piecewise quadric surface.

For example, consider the orthogonal polyhedron in Figure~\ref{figvis}. It is easy to see that the visible region of the bottom face (and also the visible region of edge $a$) is the whole polyhedron, except for a small region bordered by the thick dashed lines.

\begin{figure}[h]
\centering
\includegraphics[width=.65\linewidth]{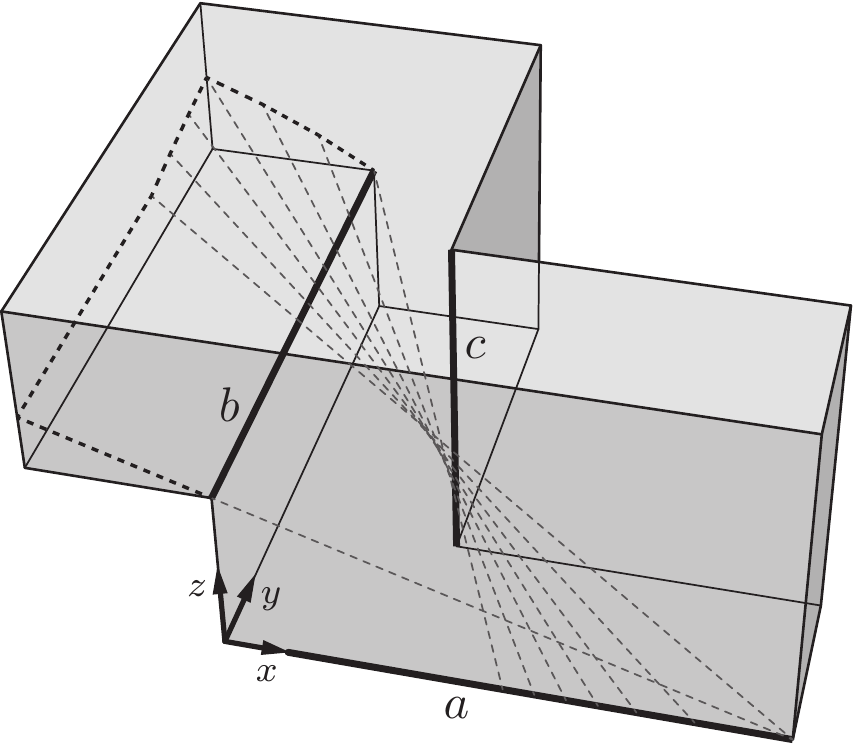}
\caption{The visible region of the bottom face is bounded by a hyperboloid of one sheet.}
\label{figvis}
\end{figure}

The surface separating the visible and invisible regions consists of a right trapezoid plus a bundle of mutually skew segments whose extensions pass through the edges $a$, $b$, and $c$. These edges lie on three lines having equations $$y^2+z^2=0,$$ $$x^2+(z-1)^2=0,$$ $$(x-1)^2+(y-1)^2=0,$$ respectively. A straightforward computation reveals that the  bundle of lines passing through these three lines has equation $$xy-xz+yz-y=0,$$ which defines a hyperboloid of one sheet.

In general, the boundary of the visible area of a face (or an edge) is determined by lines passing through pairs or triplets of edges of the polyhedron. If three edges are all parallel to a common plane, the surface they determine is a hyperbolic paraboloid (degenerating into a plane if two of the edges are parallel to each other), otherwise they determine a hyperboloid of one sheet, as in the above example.

There exists an extensive literature of purely algebraic methods to compute intersections of quadric surfaces (see for instance~\cite{quadric}), but the parameterizations involved may yield coefficients containing radicals. At this stage in our understanding, it is not clear whether any of these methods can be effectively applied to reduce the minimization problem of face-guarding polyhedra (or even edge-guarding polyhedra) to \SET.

\subsection*{Acknowledgments}
The author wishes to thank the anonymous reviewers for precious suggestions on how to improve the readability of this paper.

\small
\bibliographystyle{abbrv}

\end{document}